\definecolor{dark-gray}{gray}{0.3}
\definecolor{dkgray}{rgb}{.4,.4,.4}
\definecolor{dkblue}{rgb}{0,0,.5}
\definecolor{medblue}{rgb}{0,0,.75}
\definecolor{rust}{rgb}{0.5,0.1,0.1}
\newtheorem{theorem}{Theorem}[section]
\newtheorem{proposition}[theorem]{Proposition}
\newtheorem{fact}[theorem]{Fact}
\newtheorem{corollary}[theorem]{Corollary}
\theoremstyle{definition}
\newtheorem{definition}[theorem]{Definition}
\newtheorem{example}[theorem]{Example}
\newtheorem{remark}[theorem]{Remark}
\newcommand{\term}{\emph}
\numberwithin{equation}{section} 
\numberwithin{figure}{section}
\numberwithin{table}{section}
\numberwithin{recipe}{section}
\providecommand{\mathbold}[1]{\bm{#1}}
\renewcommand{\phi}{\varphi}
\newcommand{\eps}{\varepsilon}
\newcommand{\half}{\tfrac{1}{2}}
\newcommand{\econst}{\mathrm{e}}
\newcommand{\Id}{\mathbf{I}}
\newcommand{\sphere}[1]{\mathsf{S}^{#1}}
\newcommand{\ball}[1]{\mathsf{B}^{#1}}
\providecommand{\mathbbm}{\mathbb} 
\newcommand{\R}{\mathbbm{R}}
\newcommand{\polar}{\circ}
\newcommand{\abs}[1]{\left\vert {#1} \right\vert}
\newcommand{\abssq}[1]{{\abs{#1}}^2}
\newcommand{\diff}[1]{\mathrm{d}{#1}}
\newcommand{\idiff}[1]{\, \diff{#1}}
\newcommand{\Prob}{\mathbbm{P}}
\newcommand{\Expect}{\operatorname{\mathbb{E}}}
\newcommand{\normal}{\textsc{normal}}
\DeclareMathOperator{\Var}{Var}
\newcommand{\vct}[1]{\mathbold{#1}}
\newcommand{\mtx}[1]{\mathbold{#1}}
\newcommand{\transp}{\mathsf{t}}
\newcommand{\adj}{*}
\newcommand{\nullity}{\operatorname{null}}
\newcommand{\trace}{\operatorname{trace}}
\newcommand{\psdge}{\succcurlyeq}
\newcommand{\ip}[2]{\left\langle {#2},\ {#1} \right\rangle}
\newcommand{\absip}[2]{\abs{\ip{#1}{#2}}}
\newcommand{\abssqip}[2]{\abssq{\ip{#1}{#2}}}
\newcommand{\norm}[1]{\left\Vert {#1} \right\Vert}
\newcommand{\normsq}[1]{\norm{#1}^2}
\DeclareMathOperator{\dist}{dist}
\newcommand{\smnorm}[2]{{\bigl\Vert {#2} \bigr\Vert}_{#1}}
\newcommand{\enorm}[1]{\norm{#1}}
\newcommand{\enormsm}[1]{\enorm{\smash{#1}}}
\newcommand{\fnorm}[1]{\norm{#1}_{\mathrm{F}}}
\newcommand{\fnormsq}[1]{\fnorm{#1}^2}
\newcommand{\pnorm}[2]{\norm{#2}_{#1}}
\newcommand{\Desc}{\mathcal{D}}
\newcommand{\cone}{\operatorname{cone}}
\newcommand{\minimize}[1]{\underset{#1}{\text{minimize}}\quad}
\newcommand{\subjto}{\quad\text{subject to}\quad}
\title[The bowling scheme]{Convex recovery of a structured signal \\ from independent random linear measurements}
\author[J.~A.~Tropp]{Joel~A.~Tropp}
\date{5 May 2014. Revised 9 September 2014 and 2 December 2014.}
\begin{document}

\begin{abstract}
This chapter develops a theoretical analysis of the convex programming method
for recovering a structured signal from independent random linear measurements.
This technique delivers bounds for the sampling complexity that are
similar with recent results for standard Gaussian measurements, but
the argument applies to a much wider class of measurement ensembles.
To demonstrate the power of this approach, the paper presents
a short analysis of phase retrieval by trace-norm minimization.
The key technical tool is a framework, due to Mendelson and coauthors,
for bounding a nonnegative empirical process.
\end{abstract}

\maketitle

\section{Motivation}

Signal reconstruction from random measurements is a central preoccupation in contemporary signal processing.  In this problem, we acquire linear measurements of an unknown, structured signal through a random sampling process.
Given these random measurements, a standard method for recovering the unknown signal is to solve a convex optimization problem that enforces our prior knowledge about the structure.  The basic question is how many measurements suffice to resolve a particular type of structure.

Recent research has led to a comprehensive answer when the measurement operator follows the standard Gaussian distribution~\cite{MPT07:Reconstruction-Subgaussian,RV08:Sparse-Reconstruction,stojnic10,OH:10,CRPW12:Convex-Geometry,ALMT14:Living-Edge-II,FoyMac:13,OymHas:13,OTH13:Simple-Bounds,TOH14:Simple-Error}.
The literature also contains satisfying answers for subgaussian measurements~\cite{MPT07:Reconstruction-Subgaussian} and subexponential measurements~\cite{Men10:Empirical-Processes}.  Other types of measurement systems are quite common, but we are not aware of a simple approach that allows us to analyze general measurements in a unified way.

This chapter describes an approach that addresses a wide class of convex signal reconstruction problems involving random sampling.  To understand these questions, the core challenge is to produce a lower bound on a nonnegative empirical process.  For this purpose, we rely on a powerful framework, called the \term{Small Ball Method}, that was developed by Shahar Mendelson and coauthors in a sequence of papers, including~\cite{KM13:Bounding-Smallest,Men13:Remark-Diameter,Men14:Learning-Concentration,LM14:Compressed-Sensing,Men14:Learning-Concentration-II}.

To complete the estimates required by Mendelson's Small Ball Method, we propose a technique based on conic duality.  One advantage of this approach is that we can exploit the same insights and calculations that have served so well in the Gaussian setting.  We refer to this little argument as the \emph{bowling scheme} in honor of David Gross's \emph{golfing scheme}~\cite{Gro11:Recovering-Low-Rank}.  We anticipate that it will offer researchers an effective way to analyze many signal recovery problems with random measurements.

\subsection{Roadmap}

The first half of the chapter summarizes the established analysis of convex signal reconstruction with a Gaussian sampling model.  In Section~\ref{sec:convex-recovery}, we introduce a convex optimization framework for solving structured signal recovery problems with linear measurements, and we present a geometric formulation of the optimality conditions.  Section~\ref{sec:gauss-model} specializes to the case where the measurements come from a Gaussian model, and we explain how classical results for Gaussian processes lead to a sharp bound for the number of Gaussian measurements that suffice.  These results are framed in terms of a geometric parameter, the conic Gaussian width, associated with the convex optimization problem.  Section~\ref{sec:width-descent} explains how to use duality to obtain a numerically sharp bound for the conic Gaussian width, and it develops two important examples in detail.

In the second half of the chapter, we consider more general sampling models.
Section~\ref{sec:beyond-gauss} introduces Mendelson's Small Ball Method and
the technical arguments that support it.
As a first application, in Section~\ref{sec:subgauss-model}, we use this strategy to analyze signal reconstruction from subgaussian measurements.  Section~\ref{sec:bowling} presents the bowling scheme, which merges the conic duality estimates with Mendelson's Small Ball Method.  This technique allows us to study more general types of random measurements.
Finally, in Section~\ref{sec:phase}, we demonstrate the vigor of these ideas by applying them to the phase retrieval problem.

\section{Signal reconstruction from linear measurements}
\label{sec:convex-recovery}

We begin with a framework that describes many convex optimization methods for recovering a structured signal from linear measurements.  Examples include the $\ell_1$ minimization approach for identifying a sparse vector and the Schatten 1-norm minimization approach for identifying a low-rank matrix.  We develop a simple error bound for convex signal reconstruction by exploiting the geometric formulation of the optimality conditions.  This analysis leads us to study the minimum conic singular value of a matrix.

\subsection{Linear acquisition of data}

Let $\vct{x}^\natural \in \R^d$ be an unknown but ``structured'' signal.  Suppose that we observe a vector $\vct{y}$ in $\R^m$ that consists of $m$ linear measurements of the unknown:
\begin{equation} \label{eqn:data-acquisition}
\vct{y} = \mtx{\Phi} \vct{x}^{\natural} + \vct{e}.
\end{equation}
We assume that $\mtx{\Phi}$ is a known $m \times d$ sampling matrix, and $\vct{e} \in \R^m$ is a vector of unknown errors.  The expression~\eqref{eqn:data-acquisition} offers a model for data acquisition that describes a wide range of problems in signal processing, statistics, and machine learning.  Our goal is to compute an approximation of the unknown $\vct{x}^\natural$ by exploiting our prior knowledge about its structure.

\subsection{Reconstruction via convex optimization}

Convex optimization is a popular approach for recovering a structured vector from linear measurements.  Let $f : \R^d \to \overline{\R}$ be a proper convex function\footnote{The extended real numbers $\overline{\R} := \R \cup \{ \pm \infty \}$.  A \term{proper} convex function takes at least one finite value but never the value $-\infty$.}
that reflects the ``complexity'' of a signal.  Then we can frame the convex program
\begin{equation} \label{eqn:convex-recovery}
\minimize{\vct{x} \in \R^d}	f(\vct{x})
\subjto		\enormsm{ \mtx{\Phi} \vct{x} - \vct{y} } \leq \eta
\end{equation}
where $\enorm{\cdot}$ denotes the Euclidean norm and $\eta$ is a specified bound on the norm of the error $\vct{e}$.  In words, the optimization problem~\eqref{eqn:convex-recovery} searches for the most structured signal $\vct{x}$ that is consistent with the observed data $\vct{y}$.  In practice, it is common to consider the Lagrangian formulation of~\eqref{eqn:convex-recovery} or to consider a problem where the objective and constraint are interchanged.  We can often solve~\eqref{eqn:convex-recovery} and its variants efficiently using standard algorithms.

\begin{remark}[Alternative programs]
The optimization problem~\eqref{eqn:convex-recovery} is not the only type of convex method
for signal reconstruction.  Suppose that $f : \R^d \to \overline{\R}$ is a gauge, i.e.,
a function that is nonnegative, positively homogeneous, and convex.  Then we may consider the convex program
\begin{equation*} \minimize{\vct{x} \in \R^d}	f(\vct{x})
\subjto		f^{\polar}\big(\mtx{\Phi}^\transp( \mtx{\Phi} \vct{x} - \vct{y}) \big) \leq \eta,
\end{equation*}
where $f^\polar$ denotes the polar of the gauge~\cite[Chap.~15]{Roc70:Convex-Analysis}
and ${}^\transp$ denotes transposition.
This reconstruction method submits to an analysis similar with
the approach in this note.  For example, see~\cite[Thm.~1]{TLR14:Geometrizing-Local}.
\end{remark}

\subsection{Examples}

Before we continue, let us mention a few structures that arise in applications and the complexity measures that are typically associated with these structures.

\begin{example}[Sparse vectors]
A vector $\vct{x}^\natural \in \R^d$ is \term{sparse} when many or most of its entries are equal to zero.  We can promote sparsity by minimizing the $\ell_1$ norm $\pnorm{\ell_1}{\cdot}$.  This heuristic leads to a problem of the form
\begin{equation} \label{eqn:l1-min}
\minimize{\vct{x} \in \R^d}	\pnorm{\ell_1}{\vct{x}}
\subjto		\enormsm{ \mtx{\Phi} \vct{x} - \vct{y} } \leq \eta.
\end{equation}
Sparsity has become a dominant modeling tool in statistics, machine learning, and signal processing.   
\end{example}

\begin{example}[Low-rank matrices]
We say that a matrix $\mtx{X}^\natural \in \R^{d_1 \times d_2}$ has \term{low rank} when its rank is small compared with minimum of $d_1$ and $d_2$.  Suppose that we have acquired noisy measurements
\begin{equation} \label{eqn:matrix-acquisition}
\vct{y} = \mtx{\Phi}(\mtx{X}^\natural) + \vct{e},
\end{equation}
where $\mtx{\Phi}$ is a linear operator that maps a matrix in $\R^{d_1 \times d_2}$ to a vector in $\R^m$.  To reconstruct the unknown low-rank matrix $\mtx{X}^\natural$, we can minimize the Schatten 1-norm $\pnorm{S_1}{\cdot}$, which returns the sum of the singular values of a matrix.  This heuristic suggests that we consider an optimization problem of the form
\begin{equation} \label{eqn:S1-min}
\minimize{\vct{X} \in \R^{d_1 \times d_2}}	\pnorm{S_1}{\mtx{X}}
\subjto		\enormsm{ \mtx{\Phi}( \vct{X} ) - \vct{y} } \leq \eta.
\end{equation}
In recent years, this approach to fitting low-rank matrices has become common.
\end{example}

\noindent
It is possible to consider many other types of
structure.  For instance, see~\cite{CRPW12:Convex-Geometry,FoyMac:13}.

\subsection{A deterministic error bound for convex recovery}

We can obtain a deterministic error bound for the convex reconstruction method~\eqref{eqn:convex-recovery} using a standard geometric analysis. Recall that a \term{cone} is a set $K \subset \R^d$ that is positively homogeneous: $K = \tau K$ for all $\tau > 0$.  A \term{convex cone} is a cone that is also a convex set.  Let us introduce the cone of descent directions of a convex function.

\begin{definition}[Descent cone]
Let $f : \R^d \to \overline{\R}$ be a proper convex function.  The \term{descent cone} $\Desc(f, \vct{x})$ of the function $f$ at a point $\vct{x} \in \R^d$ is defined as
$$
\Desc(f, \vct{x}) := \bigcup_{\tau > 0} \big\{ \vct{u} \in \R^d : f(\vct{x} + \tau \vct{u}) \leq f(\vct{x}) \big\}.
$$
The descent cone of a convex function is always a convex cone, but it may not be closed.
\end{definition}

\noindent
We are interested in the behavior of the measurement matrix $\mtx{\Phi}$ when it is restricted to a descent cone.

\begin{definition}[Minimum conic singular value]
\label{def:conic-sing}
Let $\mtx{\Phi}$ be an $m \times d$ matrix, and let $K$ be a cone in $\R^d$.  The minimum singular value of $\mtx{\Phi}$ with respect to the cone $K$ is defined as
$$
\lambda_{\min}( \mtx{\Phi}; K ) := \inf\big\{ \enorm{ \mtx{\Phi} \vct{u} } :
	\vct{u} \in K \cap \sphere{d-1} \big\}
$$
where $\sphere{d-1}$ is the Euclidean unit sphere in $\R^d$.
\end{definition}

\noindent
The terminology originates in the fact that $\lambda_{\min}(\mtx{\Phi}; \R^{d})$ coincides with the usual minimum singular value. 

\begin{figure}[t!]
\begin{center}
\newcommand{\conelen}{2.75}
\newcommand{\myAlp}{12}
\newcommand{\myA}{2}
\begin{tikzpicture}[scale=1.75]
\fill[white!90!blue] (0,1) -- ++(225:\conelen) arc(225:315:\conelen) -- cycle;
\draw[thin,blue] (0,1) -- ++(225:\conelen) (0,1) -- ++(315:\conelen);

\fill[white!70!blue] (1,0) -- (0,1) -- (-1,0) -- (0,-1) -- cycle;
\draw[very thick] (1,0) -- (0,1) -- (-1,0) -- (0,-1) -- cycle;

\path (0,1) +(90-\myAlp:0.5) coordinate(tubeuc) {};
\path (0,1) +(270-\myAlp:0.5) coordinate(tubebc) {};

\path (tubeuc) +(180-\myAlp:\myA) coordinate(tubeul)  {};
\path (tubeuc) +(-\myAlp:\myA) coordinate(tubeur) {};
\path (tubebc) +(-\myAlp:\myA) coordinate(tubebr) {};
\path (tubebc) +(180-\myAlp:\myA) coordinate(tubebl) {};

\path (0,1) +(180-\myAlp:\myA) coordinate(tubecl) {};
\path (0,1) +(-\myAlp:\myA) coordinate(tubecr) {};

\fill[white!60!red,semitransparent] (tubeul) -- (tubeur) -- (tubebr) -- (tubebl) -- cycle;

\draw[red,thin] (tubeul) -- (tubeur);
\draw[red,thin] (tubebl) -- (tubebr);

\path (tubeur) +(-\myAlp:0.1) coordinate(tubearrowu) {};
\path (tubecr) +(-\myAlp:0.1) coordinate(tubearrowc) {};
\path (tubebr) +(-\myAlp:0.1) coordinate(tubearrowb) {};

\draw[->,>=latex,red!80!black,thick] (tubebl) +(-0.75,0) node[below] {$\norm{\mtx{\Phi} \vct{u} } \leq 2\eta$} to [out=90, in=180-\myAlp] +(0.3,0.15);

\draw[very thick,black!30!red] (tubecl) -- (tubecr);

\draw[->,>=latex,black!50!red,thick] (tubecl) +(-0.5,0.5) node[above] {$\nullity(\mtx{\Phi})$} to [out=-90, in = 180-\myAlp] +(0,0);

\draw[->,>=latex,thick,black!60!blue] (-1.25,0.25) node[left]{$\{\vct{u} : f(\vct{x}^\natural + \vct{u}) \leq f(\vct{x}^\natural) \}$} to [out=30, in=135] (-0.6,0.125);

\fill (0,1) circle (1pt) node[above right=-1pt]{$\vct{0}$};

\path (-1.2,-0.8) node[below,black!10!blue,scale=1]{$\Desc(f,\vct{x}^\natural)$};

\end{tikzpicture}

\caption{\textbf{[Geometry of convex recovery]}   This diagram illustrates the geometry of the optimization problem~\eqref{eqn:optimization-descent}.  The cone $\Desc(f,\vct{x}^\natural)$ contains the directions $\vct{u}$ in which $f$ is decreasing at $\vct{x}^\natural$.  Assuming that $\enorm{\vct{e}} \leq \eta$, the diagonal tube contains every point $\vct{u}$ that satisfies the bound constraint $\norm{\mtx{\Phi} \vct{u} + \vct{e}} \leq \eta$.  Each optimal point $\widehat{\vct{u}}$ for~\eqref{eqn:optimization-descent} lies in the intersection of the tube and the cone.} 
\label{fig:convex-recovery}
\end{center}
\end{figure}

With these definitions at hand, we reach the following basic result.

\begin{proposition}[A deterministic error bound for convex recovery] \label{prop:deterministic-error}
Let $\vct{x}^\natural$ be a signal in $\R^d$, let $\mtx{\Phi}$ be an $m \times d$ measurement matrix, and let
$\vct{y} = \mtx{\Phi} \vct{x}^\natural + \vct{e}$ be a vector of measurements in $\R^m$.
Assume that $\enorm{\vct{e}} \leq \eta$, and let $\widehat{\vct{x}}_{\eta}$ be any solution to the optimization problem~\eqref{eqn:convex-recovery}.  Then
$$
\smnorm{}{ \widehat{\vct{x}}_{\eta} - \vct{x}^\natural } \leq \frac{2 \eta}{\lambda_{\min}\big(\mtx{\Phi};\ \Desc( f, \vct{x}^\natural ) \big)}.
$$
\end{proposition}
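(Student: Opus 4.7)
The plan is a short geometric argument, essentially unpacking the picture in Figure~\ref{fig:convex-recovery}. Set $\widehat{\vct{u}} := \widehat{\vct{x}}_{\eta} - \vct{x}^{\natural}$. The strategy is to show that $\widehat{\vct{u}}$ lies in the descent cone $\Desc(f,\vct{x}^{\natural})$ and simultaneously satisfies $\enorm{\mtx{\Phi}\widehat{\vct{u}}} \leq 2\eta$; then the definition of the minimum conic singular value converts the latter bound on $\enorm{\mtx{\Phi}\widehat{\vct{u}}}$ into a bound on $\enorm{\widehat{\vct{u}}}$.

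First I would verify feasibility: since $\vct{y} - \mtx{\Phi}\vct{x}^{\natural} = \vct{e}$ and $\enorm{\vct{e}}\leq \eta$, the true signal $\vct{x}^{\natural}$ satisfies the constraint of~\eqref{eqn:convex-recovery}. The optimality of $\widehat{\vct{x}}_{\eta}$ then yields $f(\widehat{\vct{x}}_{\eta}) \leq f(\vct{x}^{\natural})$, i.e., $f(\vct{x}^{\natural} + \widehat{\vct{u}}) \leq f(\vct{x}^{\natural})$. Taking $\tau=1$ in the definition of the descent cone, this places $\widehat{\vct{u}} \in \Desc(f,\vct{x}^{\natural})$.

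Next I would estimate $\enorm{\mtx{\Phi}\widehat{\vct{u}}}$ with the triangle inequality:
\[
\enorm{\mtx{\Phi}\widehat{\vct{u}}} = \enorm{(\mtx{\Phi}\widehat{\vct{x}}_{\eta} - \vct{y}) - (\mtx{\Phi}\vct{x}^{\natural} - \vct{y})} \leq \eta + \enorm{\vct{e}} \leq 2\eta,
\]
where the first term is $\leq \eta$ because $\widehat{\vct{x}}_{\eta}$ is feasible and the second is $\leq \eta$ by hypothesis. Finally, if $\widehat{\vct{u}} = \vct{0}$ the conclusion is trivial; otherwise, since $\Desc(f,\vct{x}^{\natural})$ is a cone, the unit vector $\widehat{\vct{u}}/\enorm{\widehat{\vct{u}}}$ lies in $\Desc(f,\vct{x}^{\natural}) \cap \sphere{d-1}$, so Definition~\ref{def:conic-sing} gives
\[
\lambda_{\min}\bigl(\mtx{\Phi};\Desc(f,\vct{x}^{\natural})\bigr) \cdot \enorm{\widehat{\vct{u}}} \leq \enorm{\mtx{\Phi}\widehat{\vct{u}}} \leq 2\eta.
\]
Dividing yields the claimed bound. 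There is no real obstacle here; the only point to check carefully is that the descent cone is indeed closed under positive scaling so that the normalization step is legal, and that the degenerate case $\widehat{\vct{u}}=\vct{0}$ (or $\lambda_{\min}=0$, in which the bound is vacuous) is handled by the convention $0/0$ or $\cdot/0 = \infty$.
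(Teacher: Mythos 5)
Your argument is correct and is essentially the paper's own proof: the paper merely performs the change of variables $\vct{u}=\vct{x}-\vct{x}^\natural$ up front, but the substance — feasibility of $\vct{x}^\natural$ (equivalently of $\vct{u}=\vct{0}$), optimality forcing $\widehat{\vct{u}}\in\Desc(f,\vct{x}^\natural)$, the triangle inequality giving $\enorm{\mtx{\Phi}\widehat{\vct{u}}}\leq 2\eta$, and then Definition~\ref{def:conic-sing} — is identical. Your handling of the degenerate cases $\widehat{\vct{u}}=\vct{0}$ and $\lambda_{\min}=0$ is a fine, if implicit, addition.
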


\noindent
This statement is adapted from~\cite{CRPW12:Convex-Geometry}.
For completeness, we include the short proof.

\begin{proof}
It is natural to write the decision variable $\vct{x}$ in the convex program~\eqref{eqn:convex-recovery} relative to the true unknown: $\vct{u} := \vct{x} - \vct{x}^\natural$.  Using the expression~\eqref{eqn:data-acquisition} for the measurement vector $\vct{y}$, we obtain the equivalent problem
\begin{equation} \label{eqn:optimization-descent}
\minimize{\vct{u} \in \R^d} f(\vct{x}^\natural + \vct{u})
\subjto \enorm{ \mtx{\Phi} \vct{u} - \vct{e} } \leq \eta.
\end{equation}
Owing to the bound $\enorm{\vct{e}} \leq \eta$, the point $\vct{u} = \vct{0}$ is feasible for~\eqref{eqn:optimization-descent}.  Therefore, each optimal point $\widehat{\vct{u}}$ verifies $f(\vct{x}^\natural + \widehat{\vct{u}}) \leq f(\vct{x}^\natural)$.
In summary, any optimal point of~\eqref{eqn:optimization-descent} satisfies two conditions:
$$
\widehat{\vct{u}} \in \Desc(f, \vct{x}^\natural)
\quad\text{and}\quad
\enorm{ \mtx{\Phi} \widehat{\vct{u}} - \vct{e} } \leq \eta.
$$
As a consequence, we simply need to determine how far we can travel in a descent direction before we violate the bound constraint.    See Figure~\ref{fig:convex-recovery} for an illustration of the geometry.

To complete the argument, assume that $\vct{u}$ is a nonzero point in $\Desc(f, \vct{x}^\natural)$ that is feasible for~\eqref{eqn:optimization-descent}.  Then
$$
\lambda_{\min}\big(\mtx{\Phi}; \ \Desc(f,\vct{x}^\natural) \big)
	\leq \frac{ \enorm{\mtx{\Phi} \vct{u}} }{ \enorm{\vct{u}} }
	\leq \frac{ \enorm{\mtx{\Phi} \vct{u} - \vct{e}} + \enorm{\vct{e}}}
	{ \enorm{\vct{u}} }
	\leq \frac{2 \eta}{\enorm{\vct{u}}}.
$$
The first inequality follows from Definition~\ref{def:conic-sing} of the conic singular value.  The second relation is the triangle inequality.  The last bound holds because $\vct{u}$ satisfies the constraint in~\eqref{eqn:optimization-descent}, and we have assumed that $\enorm{\vct{e}} \leq \eta$.  Finally, rearrange the display, and rewrite $\vct{u}$ in terms of the original decision variable $\vct{x}$.
\end{proof}

Although Proposition~\ref{prop:deterministic-error} is elegant, it can be difficult to apply because we must calculate the minimum conic singular value of a matrix $\mtx{\Phi}$ with respect to a descent cone.  This challenge becomes less severe, however, when the matrix $\mtx{\Phi}$ is drawn at random.

\section{A universal error bound for Gaussian measurements}
\label{sec:gauss-model}

We will study the prospects for convex recovery when the sampling matrix $\mtx{\Phi}$ is chosen at random.  This modeling assumption arises in signal processing applications where the matrix describes a data-acquisition system that can extract random measurements.  This kind of model also appears in statistics and machine learning when each row of the matrix tabulates measured variables for an individual subject in an experiment.

\subsection{Standard Gaussian measurements}

In this section, we treat one of the simplest mathematical models for the $m \times d$ random measurement matrix $\mtx{\Phi}$.  We assume that each of the $m$ rows of $\mtx{\Phi}$ is drawn independently from the standard Gaussian distribution $\normal(\vct{0}, \Id_d)$, where the covariance $\Id_d$ is the $d$-dimensional identity matrix.  For this special case, we can obtain a sharp estimate for the minimum conic singular value $\lambda_{\min}(\mtx{\Phi}; K)$ for any convex cone $K$.

\subsection{The conic Gaussian width}

The analysis of Gaussian sampling depends on a geometric summary parameter for cones.

\begin{definition}[Conic Gaussian width] \label{def:conic-gauss-width}
Let $K \subset \R^d$ be a cone, not necessarily convex.  The \term{conic Gaussian width} $w(K)$ is defined as
$$
w(K) := \Expect{} \sup\nolimits_{\vct{u} \in K \cap \sphere{d-1}} \
	\ip{\vct{u}}{\smash{\vct{g}}}
$$
where $\vct{g} \sim \normal(\vct{0}, \Id_d)$ is a standard Gaussian vector in $\R^d$.
\end{definition}

\noindent
The Gaussian width plays a central role in asymptotic convex geometry~\cite{MS86:Asymptotic-Theory,Pis89:Volume-Convex,LT91:Probability-Banach}.  Most of the classical techniques for bounding widths are only accurate up to constant factors (or worse).  In contrast, ideas from the contemporary signal processing literature frequently allow us to produce numerically sharp estimates for the Gaussian width of a cone.  These techniques were developed in the papers~\cite{stojnic10,OH:10,CRPW12:Convex-Geometry,ALMT14:Living-Edge-II,FoyMac:13}.  We will outline one of the methods in Section~\ref{sec:width-descent}.

\begin{remark}[Statistical dimension]
The conic Gaussian width $w(K)$ is a convenient functional because it arises
from the probabilistic tools that we use.  The theory of conic integral geometry,
however, delivers a better summary parameter~\cite{ALMT14:Living-Edge-II}.
The \term{statistical dimension} $\delta(K)$ of a convex cone $K$ can be defined as
$$
\delta(K) := \Expect{} \big[ \big(\sup\nolimits_{\vct{u} \in K \cap \ball{d}}
	\ \ip{ \vct{u} }{ \smash{\vct{g}} } \big)^2 \big],
$$
where $\ball{d}$ is the Euclidean unit ball in $\R^d$ and $\vct{g} \sim \normal(\vct{0}, \Id_d)$.
The statistical dimension canonically extends the dimension
of a subspace to the class of convex cones, and it satisfies many elegant
identities~\cite[Prop.~3.1]{ALMT14:Living-Edge-II}.
For some purposes, the two parameters are interchangeable
because of the following comparison~\cite[Prop.~10.2]{ALMT14:Living-Edge-II}:
$$
w^2(K) \leq \delta(K) \leq w^2(K) + 1.
$$
As a consequence, we can interpret $w^2(K)$ as a rough measure of the ``dimension'' of a cone.
\end{remark}

\subsection{Conic singular values and conic Gaussian widths}

As it turns out, the conic Gaussian width $w(K)$ controls the minimum conic singular value $\lambda_{\min}(\mtx{\Phi}; K)$ when $\mtx{\Phi}$ follows the standard normal distribution.

\begin{proposition}[Minimum conic singular value of a Gaussian matrix] \label{prop:conic-sing-gauss}
Let $K \subset \R^d$ be a cone, not necessarily convex,
and let $\mtx{\Phi}$ be an $m \times d$ matrix whose rows are independent vectors drawn from the standard Gaussian distribution $\normal(\vct{0}, \Id_d)$.  Then
$$
\lambda_{\min}(\mtx{\Phi}; K) \geq \sqrt{m-1} - w(K) - t
$$
with probability at least $1 - \econst^{-t^2/2}$.
\end{proposition}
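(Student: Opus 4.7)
My plan is to bound $\lambda_{\min}(\mtx{\Phi};K)$ via Gordon's minimax comparison inequality, then convert the expectation bound into a tail bound via Gaussian concentration.

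First I would rewrite the conic singular value as a saddle quantity,
$$
\lambda_{\min}(\mtx{\Phi};K) \;=\; \inf_{\vct{u}\in K\cap\sphere{d-1}} \; \sup_{\vct{v}\in\sphere{m-1}} \; \ip{\vct{v}}{\mtx{\Phi}\vct{u}},
$$
and introduce the Gaussian process $X_{\vct{u},\vct{v}} := \ip{\vct{v}}{\mtx{\Phi}\vct{u}}$ on $T:=(K\cap\sphere{d-1})\times\sphere{m-1}$. Alongside it I introduce the auxiliary process $Y_{\vct{u},\vct{v}} := \ip{\vct{g}}{\vct{u}} + \ip{\vct{h}}{\vct{v}}$ driven by independent standard Gaussians $\vct{g}\in\R^d$ and $\vct{h}\in\R^m$. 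A direct variance computation gives $\Expect(X_{\vct{u},\vct{v}}-X_{\vct{u}',\vct{v}'})^2 = 2 - 2\ip{\vct{u}}{\vct{u}'}\ip{\vct{v}}{\vct{v}'}$ while $\Expect(Y_{\vct{u},\vct{v}}-Y_{\vct{u}',\vct{v}'})^2 = 4 - 2\ip{\vct{u}}{\vct{u}'} - 2\ip{\vct{v}}{\vct{v}'}$, and the inequality between them reduces to $(1-\ip{\vct{u}}{\vct{u}'})(1-\ip{\vct{v}}{\vct{v}'})\ge 0$, with equality when $\vct{u}=\vct{u}'$. This is exactly the increment hypothesis required by Gordon's inequality.

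Applying Gordon's comparison yields
$$
\Expect\lambda_{\min}(\mtx{\Phi};K)
\;\ge\; \Expect\!\inf_{\vct{u}\in K\cap\sphere{d-1}}\sup_{\vct{v}\in\sphere{m-1}} Y_{\vct{u},\vct{v}}
\;=\; \Expect\enorm{\vct{h}} \;-\; \Expect\sup_{\vct{u}\in K\cap\sphere{d-1}}\ip{\vct{u}}{-\vct{g}}
\;=\; \Expect\enorm{\vct{h}} \;-\; w(K),
$$
using independence of $\vct{g}$ and $\vct{h}$ and the symmetry $-\vct{g}\stackrel{d}{=}\vct{g}$. Since $\enorm{\cdot}$ is $1$-Lipschitz, Gaussian Poincar\'e/concentration gives $\Var(\enorm{\vct{h}})\le 1$, so $\Expect\enorm{\vct{h}} \ge \sqrt{\Expect\enormsq{\vct{h}} - 1} = \sqrt{m-1}$. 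Thus $\Expect\lambda_{\min}(\mtx{\Phi};K) \ge \sqrt{m-1} - w(K)$.

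Finally I would convert this to the tail bound by noting that $\mtx{\Phi}\mapsto\lambda_{\min}(\mtx{\Phi};K)$ is $1$-Lipschitz with respect to the Frobenius norm: it is the infimum over $\vct{u}\in K\cap\sphere{d-1}$ of $\mtx{\Phi}\mapsto\enorm{\mtx{\Phi}\vct{u}}$, each of which is $1$-Lipschitz since $\enorm{\mtx{\Phi}\vct{u}-\mtx{\Phi}'\vct{u}} \le \fnorm{\mtx{\Phi}-\mtx{\Phi}'}\enorm{\vct{u}}$. The standard Gaussian concentration inequality on $\R^{m\times d}$ then gives
$$
\Prob\!\left\{ \lambda_{\min}(\mtx{\Phi};K) \;<\; \Expect\lambda_{\min}(\mtx{\Phi};K) - t \right\} \;\le\; \econst^{-t^2/2},
$$
and combining with the expectation bound yields the claim. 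The only delicate step is checking the increment inequality needed to invoke Gordon's comparison; once that algebraic identity is verified everything else is routine.
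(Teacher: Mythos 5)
Your proof is correct and takes essentially the same route as the paper's: Gordon's minimax comparison with the auxiliary process $\ip{\vct{u}}{\smash{\vct{g}}} + \ip{\vct{v}}{\vct{h}}$ yields $\Expect \lambda_{\min}(\mtx{\Phi};K) \geq \sqrt{m-1} - w(K)$, and Gaussian concentration for the Frobenius-Lipschitz map $\mtx{\Phi} \mapsto \lambda_{\min}(\mtx{\Phi};K)$ converts this into the stated tail bound. The only difference is that you verify the increment conditions for Gordon's inequality and derive $\Expect\enorm{\vct{h}} \geq \sqrt{m-1}$ via the Poincar\'e inequality explicitly, steps the paper's sketch simply cites.
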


\noindent
In essence, this result dates to the work of Gordon~\cite{Gor85:Some-Inequalities,Gor88:Milmans-Inequality}.  We have drawn the proof from the survey~\cite[Sec.~3.2]{DS01:Local-Operator} of Davidson \& Szarek; see also~\cite{MPT07:Reconstruction-Subgaussian,RV08:Sparse-Reconstruction,stojnic10,CRPW12:Convex-Geometry}.  Note that the argument relies on special results for Gaussian processes that do not extend to other distributions.

\begin{proof}[Proof sketch]
We can express the minimum conic singular value as
$$
\lambda_{\min}(\mtx{\Phi}; K)
	= \inf_{\vct{u}\in K \cap \sphere{d-1}} \ \sup_{\vct{v} \in \sphere{m-1}} \
	\ip{ \mtx{\Phi} \vct{u} }{ \vct{v} }
$$
It is a consequence of Gordon's comparison inequality~\cite[Thm.~1.4]{Gor85:Some-Inequalities} that
$$
\Expect{} \inf_{\vct{u} \in K \cap \sphere{d-1}} \ \sup_{\vct{v} \in \sphere{m-1}}
	\ip{ \mtx{\Phi} \vct{u} }{ \vct{v} }
	\geq \Expect{} \sup_{\vct{v} \in \sphere{m-1}} \ip{ \vct{v} }{ \smash{\vct{g}'} }
	- \Expect{} \sup_{\vct{u} \in K \cap \sphere{d-1}} \ip{ \vct{u} }{ \smash{\vct{g}} }
	= \Expect{} \enormsm{\vct{g}'} - w(K),
$$
where $\vct{g}' \sim \normal(\vct{0}, \Id_m)$ and $\vct{g} \sim \normal(\vct{0},\Id_d)$.
It is well known that $\Expect{} \enormsm{\vct{g}'} \geq \sqrt{m-1}$, and therefore
\begin{equation} \label{eqn:expect-conic-sing-gauss}
\Expect \lambda_{\min}(\mtx{\Phi}; K)
	\geq \sqrt{m-1} - w(K).
\end{equation}
To complete the argument, note that the map
$$
\lambda_{\min}(\cdot; K) : \mtx{A} \mapsto \inf_{\vct{u} \in K \cap \sphere{d-1}} \norm{\mtx{A} \vct{u}}
$$
is 1-Lipschitz with respect to the Frobenius norm.  The usual Gaussian concentration inequality~\cite[Sec.~5.4]{BLM13:Concentration-Inequalities} implies that
\begin{equation} \label{eqn:tail-conic-sing-gauss}
\Prob\big\{ \lambda_{\min}(\mtx{\Phi}; K) \leq \Expect \lambda_{\min}(\mtx{\Phi}; K) - t \big\} \leq \econst^{-t^2/2}.
\end{equation}
Introduce the lower bound~\eqref{eqn:expect-conic-sing-gauss} for the expectation of the minimum conic singular value
into~\eqref{eqn:tail-conic-sing-gauss} to reach the advertised result.
\end{proof}

\begin{remark}[Sharpness for convex cones] \label{rem:sharp-convex}
It is a remarkable fact that the bound in Proposition~\ref{prop:conic-sing-gauss} is essentially sharp.
For any cone $K$, we can reinterpret the statement as saying that
$$
\lambda_{\min}(\mtx{\Phi}; K) > 0
\quad\text{with high probability when}\quad
m \geq w^2(K) + C w(K).
$$
(The letter $C$ always denotes a positive absolute constant, but its value may change from place to place.)
Conversely, for a convex cone $K$, it holds that
\begin{equation} \label{eqn:conic-sing-gauss-converse}
\lambda_{\min}(\mtx{\Phi}; K) = 0
\quad\text{with high probability when}\quad
m \leq w^2(K) - C w(K).
\end{equation}
The result~\eqref{eqn:conic-sing-gauss-converse} follows from research of Amelunxen et al.~\cite[Thm.~I and Prop.~10.2]{ALMT14:Living-Edge-II}.  This claim can also be derived by supplementing the proof of Proposition~\ref{prop:conic-sing-gauss} with a short polarity argument.  It is productive to interpret the pair of estimates in this remark as a \term{phase transition} for convex signal recovery; see~\cite{ALMT14:Living-Edge-II} for more information.
\end{remark}

\subsection{An error bound for Gaussian measurements}

Combining Proposition~\ref{prop:deterministic-error} and Proposition~\ref{prop:conic-sing-gauss}, we obtain a general error bound for convex recovery from Gaussian measurements.

\begin{corollary}[Signal recovery from Gaussian measurements] \label{cor:recovery-gauss}
Let $\vct{x}^\natural$ be a signal in $\R^d$.  Let $\mtx{\Phi}$ be an $m \times d$ matrix whose rows are independent random vectors drawn from the standard Gaussian distribution $\normal(\vct{0}, \Id_d)$, and let $\vct{y} = \mtx{\Phi} \vct{x}^\natural + \vct{e}$ be a vector of measurements in $\R^m$.  With probability at least $1 - \econst^{-t^2/2}$, the following statement holds.  Assume that $\enorm{\vct{e}} \leq \eta$, and let $\widehat{\vct{x}}_{\eta}$ be any solution to the optimization problem~\eqref{eqn:convex-recovery}.  Then
$$
\smnorm{}{ \widehat{\vct{x}}_{\eta} - \vct{x}^\natural } \leq \frac{2 \eta}{\big[ \sqrt{m-1} - w\big(\Desc(f, \vct{x}^\natural)\big) - t \big]_+}.
$$
The operation $[a]_+ := \max\{a, 0\}$ returns the positive part of a number.
\end{corollary}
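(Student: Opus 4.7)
The plan is to combine the two headline results that have just been established. Proposition~\ref{prop:deterministic-error} produces a deterministic upper bound for $\smnorm{}{\widehat{\vct{x}}_\eta - \vct{x}^\natural}$ that depends on the measurement matrix $\mtx{\Phi}$ only through the minimum conic singular value $\lambda_{\min}(\mtx{\Phi}; \Desc(f, \vct{x}^\natural))$. Proposition~\ref{prop:conic-sing-gauss} controls that singular value from below in high probability whenever the rows of $\mtx{\Phi}$ are independent standard Gaussians. Threading one estimate into the other gives the claim, and the positive part in the denominator appears only because we must guard against the trivial range of parameters in which the Gaussian lower bound is vacuous.

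Concretely, I would proceed as follows. First, specialize Proposition~\ref{prop:conic-sing-gauss} to the cone $K = \Desc(f, \vct{x}^\natural) \subset \R^d$; this cone is convex but we do not need convexity for the Gaussian estimate. Let $E_t$ be the event
$$
\lambda_{\min}\big(\mtx{\Phi}; \Desc(f, \vct{x}^\natural)\big) \geq \sqrt{m-1} - w\big(\Desc(f, \vct{x}^\natural)\big) - t,
$$
which has probability at least $1 - \econst^{-t^2/2}$. Second, work on the event $E_t$ and suppose $\enorm{\vct{e}} \leq \eta$. Proposition~\ref{prop:deterministic-error} applies deterministically to the realized matrix $\mtx{\Phi}$ and yields
$$
\smnorm{}{\widehat{\vct{x}}_\eta - \vct{x}^\natural} \leq \frac{2\eta}{\lambda_{\min}\big(\mtx{\Phi}; \Desc(f, \vct{x}^\natural)\big)}.
$$
Substituting the lower bound from $E_t$ yields the stated inequality whenever the right-hand side of that lower bound is strictly positive.

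To account for the remaining case, observe that if $\sqrt{m-1} - w(\Desc(f, \vct{x}^\natural)) - t \leq 0$, the denominator on the right-hand side of the advertised bound is $0$ (under the convention $[a]_+ = \max\{a,0\}$), so the inequality reads $\smnorm{}{\widehat{\vct{x}}_\eta - \vct{x}^\natural} \leq +\infty$ and holds trivially. Writing the denominator as $[\,\cdot\,]_+$ therefore captures both regimes uniformly, and we get the desired conclusion on the event $E_t$ of probability at least $1 - \econst^{-t^2/2}$.

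There is essentially no obstacle here: the only subtlety is bookkeeping the quantifier order. The probabilistic statement about $\mtx{\Phi}$ in Proposition~\ref{prop:conic-sing-gauss} does not depend on $\vct{e}$ or $\eta$, so we may pull the event $E_t$ out in front and assert that, with probability at least $1 - \econst^{-t^2/2}$, the deterministic conclusion holds for every admissible $\vct{e}$ with $\enorm{\vct{e}} \leq \eta$ and every minimizer $\widehat{\vct{x}}_\eta$ of~\eqref{eqn:convex-recovery}. This is exactly how the corollary is phrased.
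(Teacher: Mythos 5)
Your proposal is correct and follows exactly the route the paper takes: the corollary is obtained by instantiating Proposition~\ref{prop:conic-sing-gauss} with $K = \Desc(f,\vct{x}^\natural)$ and plugging the resulting high-probability lower bound into the deterministic estimate of Proposition~\ref{prop:deterministic-error}, with the $[\,\cdot\,]_+$ convention absorbing the degenerate case. The careful handling of the quantifier order (fixing the event on $\mtx{\Phi}$ first, then allowing any admissible $\vct{e}$ and any minimizer) matches the paper's phrasing.
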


\noindent
The overall argument that leads to this result was proposed by Rudelson \& Vershynin~\cite[Sec.~4]{RV08:Sparse-Reconstruction}; the statement here is adapted from~\cite{CRPW12:Convex-Geometry}.

Corollary~\ref{cor:recovery-gauss} provides for stable recovery of the unknown $\vct{x}^\natural$ when the number $m$ of measurements satisfies
$$
m \geq w^2\big( \Desc(f, \vct{x}^\natural) \big) + C w\big( \Desc(f, \vct{x}^\natural) \big).
$$
In view of Remark~\ref{rem:sharp-convex}, Corollary~\ref{cor:recovery-gauss} provides a refined estimate for the amount of information that suffices to identify a structured vector from Gaussian measurements via convex optimization.

\begin{remark}[The normal error model]
It is possible to improve the error bound in Corollary~\ref{cor:recovery-gauss}
if we instate a Gaussian model for the error vector $\vct{e}$.
See the papers~\cite{OymHas:13,OTH13:Simple-Bounds,TOH14:Simple-Error}
for an analysis of this case.
\end{remark}

\section{Controlling the width of a descent cone via polarity}
\label{sec:width-descent}

As soon as we know the conic Gaussian width of the descent cone,
Corollary~\ref{cor:recovery-subgauss} yields error bounds for convex recovery of a structured signal from Gaussian measurements.  To make use of this result, we need technology for calculating these widths.  This section describes a mechanism, based on polarity, that leads to extremely accurate estimates.  We can trace this method to the papers~\cite{stojnic10,OH:10}, where it is couched in the language of duality for cone programs.
The subsequent papers~\cite{CRPW12:Convex-Geometry,ALMT14:Living-Edge-II} rephrase these ideas in a more geometric fashion.
It can be shown that the approach in this section gives sharp results for many natural examples;
see~\cite[Thm.~4.3]{ALMT14:Living-Edge-II} or~\cite[Prop.~1]{FoyMac:13}.
Although polar bounds for widths are classic in asymptotic convex geometry~\cite{MS86:Asymptotic-Theory,Pis89:Volume-Convex,LT91:Probability-Banach}, the refined arguments here are just a few years old.

\subsection{Polarity and weak duality for cones}

We begin with some classical facts about conic geometry.  
\begin{fact}[Polarity]
Let $K$ be a general cone in $\R^d$.  The \term{polar cone} $K^\polar$ is the closed convex cone
$$
K^\polar := \big\{ \vct{v} \in \R^d : \ip{ \vct{x} }{ \vct{v} } \leq 0 \ \text{for all $\vct{x} \in K$} \big\}.
$$
It is easy to verify that $K \subset (K^{\polar})^\polar$.
\end{fact}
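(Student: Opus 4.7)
The inclusion $K \subset (K^\polar)^\polar$ is an almost immediate unpacking of definitions, driven entirely by the symmetry of the inner product. The plan is to pick an arbitrary element of $K$ and show it satisfies the defining inequality for the bipolar.

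First, fix $\vct{x} \in K$. To conclude $\vct{x} \in (K^\polar)^\polar$, I need to verify that
$$
\ip{\vct{v}}{\vct{x}} \leq 0 \quad \text{for every } \vct{v} \in K^\polar,
$$
since this is exactly the defining condition for membership in the polar of $K^\polar$.

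Second, invoke the definition of $K^\polar$ itself: any $\vct{v} \in K^\polar$ satisfies $\ip{\vct{x}'}{\vct{v}} \leq 0$ for every $\vct{x}' \in K$. Specializing this quantified statement to the particular point $\vct{x}' = \vct{x}$, which belongs to $K$ by assumption, yields $\ip{\vct{x}}{\vct{v}} \leq 0$. Symmetry of the real inner product then gives $\ip{\vct{v}}{\vct{x}} = \ip{\vct{x}}{\vct{v}} \leq 0$, which is the required inequality.

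There is no genuine obstacle here: the inclusion holds for any subset of $\R^d$ whatsoever, not just for cones, because the argument never uses positive homogeneity or convexity. The substantive companion fact, which I would flag but not attempt to prove at this stage, is the reverse inclusion $(K^\polar)^\polar \subset K$, which requires $K$ to be a closed convex cone and relies on a separating hyperplane argument; presumably the paper develops this separately when it needs weak duality for cone programs.
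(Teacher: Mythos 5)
Your argument is correct and is exactly the routine definition-unpacking that the paper has in mind when it says the inclusion is ``easy to verify''; the paper supplies no separate proof, so there is nothing different to compare against. Your side remarks (the inclusion holds for arbitrary subsets, while the reverse inclusion is the bipolar theorem and needs $K$ closed and convex) are also accurate and consistent with how the paper later invokes the bipolar theorem only for the strong-duality remark.
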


\noindent
Recall that the \term{distance} from a point $\vct{x} \in \R^d$ to a set $E \subset \R^d$
is defined by the relation
$$
\dist(\vct{x}, E) := \inf_{\vct{u} \in E} \ \norm{ \vct{x} - \vct{u} }.
$$
With these definitions, we reach the following weak duality result.

\begin{proposition}[Weak duality for cones] \label{prop:weak-duality}
Let $K$ be a general cone in $\R^d$.
For $\vct{x} \in \R^d$,
$$
\sup_{\vct{u} \in K \cap \sphere{d-1}} \ \ip{ \smash{\vct{u}} }{ \vct{x} }
	\leq \dist( \vct{x}, K^\polar ).
$$
\end{proposition}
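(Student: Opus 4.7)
The plan is to reduce the claim to a two-line chain of inequalities built from the defining property of $K^\polar$ and the Cauchy--Schwarz inequality. The key observation is that for any $\vct{v} \in K^\polar$ and any $\vct{u} \in K$, the polar relation gives $\ip{\vct{u}}{\vct{v}} = \langle \vct{v}, \vct{u}\rangle \leq 0$, so I can subtract an arbitrary element of $K^\polar$ from $\vct{x}$ without decreasing the inner product with $\vct{u}$.

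Concretely, I would fix $\vct{u} \in K \cap \sphere{d-1}$ and an arbitrary $\vct{v} \in K^\polar$, then write
$$
\ip{\vct{u}}{\vct{x}} = \ip{\vct{u}}{\vct{x} - \vct{v}} + \ip{\vct{u}}{\vct{v}} \leq \ip{\vct{u}}{\vct{x} - \vct{v}} \leq \enorm{\vct{u}} \cdot \enorm{\vct{x} - \vct{v}} = \enorm{\vct{x} - \vct{v}},
$$
using the polar inequality for the first step, Cauchy--Schwarz for the second, and $\enorm{\vct{u}} = 1$ for the last. Taking the infimum over $\vct{v} \in K^\polar$ on the right-hand side produces $\ip{\vct{u}}{\vct{x}} \leq \dist(\vct{x}, K^\polar)$, and taking the supremum over $\vct{u} \in K \cap \sphere{d-1}$ on the left yields the stated bound.

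There is essentially no obstacle here; the only subtlety worth a sentence is handling the edge case in which $K \cap \sphere{d-1}$ is empty (so the supremum is $-\infty$ by convention) or the case when $\vct{x} \in \closure{K^\polar}$ so that the right-hand distance is zero, in which case the inequality still holds because every $\vct{u} \in K$ pairs nonpositively with every element of $K^\polar$, and hence by continuity also with every element of $\closure{K^\polar}$. No convexity of $K$ is needed, which is consistent with the statement allowing $K$ to be a general cone.
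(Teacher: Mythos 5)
Your proof is correct, and it takes a more elementary route than the one in the paper. The paper writes $\dist(\vct{x}, K^\polar)$ as $\inf_{\vct{v} \in K^\polar} \sup_{\vct{u} \in \sphere{d-1}} \ip{\smash{\vct{u}}}{\vct{x} - \vct{v}}$, applies the inf--sup inequality to swap the order, and then uses the bipolar containment $K \subset (K^\polar)^\polar$ to restrict the outer supremum back to $K \cap \sphere{d-1}$; your argument skips both the minimax step and the bipolar cone entirely, replacing them with a pointwise estimate: for fixed $\vct{u} \in K \cap \sphere{d-1}$ and $\vct{v} \in K^\polar$, the polar inequality $\ip{\vct{u}}{\vct{v}} \leq 0$ plus Cauchy--Schwarz gives $\ip{\vct{u}}{\vct{x}} \leq \enorm{\vct{x} - \vct{v}}$, and then you take the infimum and supremum. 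This is shorter and makes it transparent that no convexity or closedness of $K$ is needed. What the paper's formulation buys in exchange is that it sets up the strong duality statement in the subsequent remark: upgrading the inf--sup inequality to an equality via Sion's minimax theorem and the bipolar theorem yields $\sup_{\vct{u} \in K \cap \ball{d}} \ip{\smash{\vct{u}}}{\vct{x}} = \dist(\vct{x}, K^\polar)$ for convex cones, a refinement that is not visible from the pointwise argument. Your edge-case remarks are fine but largely unnecessary: $K^\polar$ is already closed by definition, and the case $\dist(\vct{x}, K^\polar) = 0$ is handled automatically by taking the infimum over $\vct{v}$.
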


\begin{proof}
The argument is based on a simple duality trick.  First, write
$$
\dist( \vct{x}, K^\polar ) = \inf_{\vct{v} \in K^\polar} \ \norm{ \vct{x} - \vct{v} }
	= \inf_{\vct{v} \in K^\polar} \ \sup_{\vct{u} \in \sphere{d-1}} \ \ip{ \smash{\vct{u}} }{ \vct{x} - \vct{v} }.
$$
Apply the inf--sup inequality:
$$
\dist( \vct{x}, K^\polar ) \geq \sup_{\vct{u} \in \sphere{d-1}} \ \inf_{\vct{v} \in K^\polar} \
	\ip{ \smash{\vct{u}} }{ \vct{x} - \vct{v} }
	=  \sup_{\vct{u} \in \sphere{d-1}} \bigg[ \ip{ \smash{\vct{u}} }{ \vct{x} }
	- \sup_{\vct{v} \in K^\polar} \ \ip{ \smash{\vct{u}} }{ \vct{v} } \bigg].
$$
By definition of polarity, the inner supremum takes the value $+\infty$ unless $\vct{u} \in (K^{\polar})^\polar$.
We determine that
$$
\dist( \vct{x}, K^\polar ) 
	\geq \sup_{\vct{u} \in (K^{\polar})^\polar \cap \sphere{d-1}} \ \ip{ \smash{\vct{u}} }{ \vct{x} }
	\geq \sup_{\vct{u} \in K \cap \sphere{d-1}} \ \ip{ \smash{\vct{u}} }{ \vct{x} }.
$$
The last inequality holds because $K \subset (K^{\polar})^\polar$.
\end{proof}

\begin{remark}[Strong duality for cones]
If $K$ is a convex cone and we replace the sphere with a ball, then we have strong duality instead:
$$
\sup_{\vct{u} \in K \cap \ball{d}} \ \ip{ \smash{\vct{u}} }{ \vct{x} }
	= \dist( \vct{x}, K^\polar ).
$$
The proof uses Sion's minimax theorem~\cite{Sio58:General-Minimax}
and the bipolar theorem~\cite[Thm.~14.1]{Roc70:Convex-Analysis}.
\end{remark}

\subsection{The conic Gaussian width of a descent cone}

We can use Proposition~\ref{prop:weak-duality}
to obtain an effective bound for the width of a descent cone.
This approach is based on a classical polarity
correspondence~\cite[Thm.~23.7]{Roc70:Convex-Analysis}.

\begin{fact}[Polarity for descent cones] \label{fact:polar-descent}
The \term{subdifferential} of a proper convex function $f : \R^d \to \overline{\R}$
at a point $\vct{x} \in \R^d$ is the closed convex set
$$
\partial f(\vct{x}) := \big\{ \vct{v} \in \R^d : f(\vct{y}) \geq f(\vct{x}) + \ip{\smash{\vct{y}} - \vct{x}}{ \vct{v} }
	\ \text{for all $\vct{y} \in \R^d$} \big\}.
$$
Assume that the subdifferential $\partial f(\vct{x})$ is nonempty and does not contain the origin.  Then
\begin{equation} \label{eqn:desc-polar}
\Desc(f, \vct{x})^\polar = \overline{\cone}( \partial f(\vct{x}) )
	:= \operatorname{closure}\left( \bigcup_{\tau \geq 0} \tau \cdot \partial f(\vct{x}) \right).
\end{equation}
\end{fact}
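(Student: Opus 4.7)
The plan is to establish the two inclusions separately. The easy inclusion $\overline{\cone}(\partial f(\vct{x})) \subseteq \Desc(f, \vct{x})^\polar$ follows directly from the subgradient inequality, while the reverse inclusion requires a separation argument combined with the classical identification of the one-sided directional derivative $f'(\vct{x}; \cdot)$ with the support function of $\partial f(\vct{x})$.

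For the easy direction, I would fix $\vct{v} \in \partial f(\vct{x})$ and $\vct{u} \in \Desc(f, \vct{x})$. By definition of the descent cone, there exists $\tau > 0$ with $f(\vct{x} + \tau \vct{u}) \leq f(\vct{x})$, and the subgradient inequality evaluated at $\vct{y} = \vct{x} + \tau \vct{u}$ gives
$$
0 \;\geq\; f(\vct{x} + \tau \vct{u}) - f(\vct{x}) \;\geq\; \tau \ip{\vct{u}}{\vct{v}},
$$
so $\ip{\vct{u}}{\vct{v}} \leq 0$. Hence $\partial f(\vct{x}) \subseteq \Desc(f, \vct{x})^\polar$; since any polar cone is a closed convex cone, the inclusion passes to $\overline{\cone}(\partial f(\vct{x}))$.

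For the reverse inclusion, I would argue by contrapositive. Suppose $\vct{v}_0 \notin \overline{\cone}(\partial f(\vct{x}))$. Since that set is a closed convex cone, standard cone separation produces a direction $\vct{u}_0$ with $\ip{\vct{u}_0}{\vct{v}_0} > 0$ and $\ip{\vct{u}_0}{\vct{w}} \leq 0$ for every $\vct{w} \in \partial f(\vct{x})$. The target is to show that some direction close to $\vct{u}_0$ lies in $\Desc(f, \vct{x})$, which would expose $\vct{v}_0$ as being outside the polar of the descent cone.

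The main obstacle is precisely this last step: the bound $\sup_{\vct{w} \in \partial f(\vct{x})} \ip{\vct{u}_0}{\vct{w}} \leq 0$ says only that $f$ does not grow strictly in the direction $\vct{u}_0$, which is not enough to conclude a genuine descent. This is where the hypothesis $\vct{0} \notin \partial f(\vct{x})$ enters. Strictly separating the origin from the nonempty closed convex set $\partial f(\vct{x})$ supplies an auxiliary direction $\vct{u}_1$ with $\sup_{\vct{w} \in \partial f(\vct{x})} \ip{\vct{u}_1}{\vct{w}} < 0$, which is a genuine descent direction. By sublinearity of the support function $\vct{u} \mapsto \sup_{\vct{w} \in \partial f(\vct{x})} \ip{\vct{u}}{\vct{w}}$, the perturbed direction $\vct{u}_0 + \eps \vct{u}_1$ still satisfies
$$
\sup_{\vct{w} \in \partial f(\vct{x})} \ip{\vct{u}_0 + \eps \vct{u}_1}{\vct{w}} \;<\; 0 \qquad\text{for every } \eps > 0,
$$
so it belongs to $\Desc(f, \vct{x})$ (take any small $\tau$). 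For sufficiently small $\eps > 0$, the inner product $\ip{\vct{u}_0 + \eps \vct{u}_1}{\vct{v}_0}$ remains strictly positive, witnessing $\vct{v}_0 \notin \Desc(f, \vct{x})^\polar$ and closing the contrapositive.
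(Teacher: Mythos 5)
Your first inclusion and both separation steps are sound, but the final inference --- from $\sup_{\vct{w} \in \partial f(\vct{x})} \ip{\vct{u}_0 + \eps \vct{u}_1}{\vct{w}} < 0$ to $\vct{u}_0 + \eps \vct{u}_1 \in \Desc(f,\vct{x})$ --- is a genuine gap, and it is exactly where the substance of the statement lies (the paper offers no proof of this Fact; it is quoted as classical from \cite[Thm.~23.7]{Roc70:Convex-Analysis}). The subgradient inequality gives $f(\vct{x} + \tau\vct{u}) \geq f(\vct{x}) + \tau \ip{\vct{u}}{\vct{w}}$, so the support function of $\partial f(\vct{x})$ lies \emph{below} the directional derivative: $f'(\vct{x};\vct{u}) \geq \sup_{\vct{w}\in\partial f(\vct{x})}\ip{\vct{u}}{\vct{w}}$. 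The ``classical identification'' you invoke is only valid up to closure: the support function of $\partial f(\vct{x})$ is the lower semicontinuous hull of $f'(\vct{x};\cdot)$ (cf.\ \cite[Thm.~23.4]{Roc70:Convex-Analysis}), and the two can differ --- with $f'(\vct{x};\vct{u}) = +\infty$ --- at directions on the relative boundary of the cone of feasible directions; in particular, a direction with strictly negative support value can point straight out of $\dom f$ and then is certainly not a descent direction. Concretely, in $\R^2$ take $D = \{\vct{w} : w_2 \leq -1,\ w_1 \leq -w_2^2\}$, $C = \{\vct{u} : u_1 > 0\} \cup \{\vct{0}\}$, and $f(\vct{u}) = \sup_{\vct{w}\in D} \ip{\vct{u}}{\vct{w}}$ for $\vct{u} \in C$, $f(\vct{u}) = +\infty$ otherwise. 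Then $f$ is proper convex, $\partial f(\vct{0}) = D$ is nonempty and omits the origin, and $\vct{u}_* = (0,1)$ satisfies $\sup_{\vct{w}\in D}\ip{\vct{u}_*}{\vct{w}} = -1 < 0$ while $f(\tau \vct{u}_*) = +\infty > 0 = f(\vct{0})$ for every $\tau > 0$, so $\vct{u}_* \notin \Desc(f,\vct{0})$. Worse for your argument: with $\vct{v}_0 = (0,1) \notin \overline{\cone}(D)$, your recipe may legitimately return $\vct{u}_0 = \vct{u}_1 = (0,1)$, and then every witness $(0, 1+\eps)$ it produces fails to be a descent direction, so the contrapositive never closes.

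The repair is to argue at the level of closures. If $\sup_{\vct{w}\in\partial f(\vct{x})}\ip{\vct{u}}{\vct{w}} < 0$, then, because this support function is the closure of $f'(\vct{x};\cdot)$, there are directions $\vct{u}'$ arbitrarily near $\vct{u}$ with $f'(\vct{x};\vct{u}') < 0$, and those are genuine descent directions; hence $\vct{u} \in \operatorname{closure}\big(\Desc(f,\vct{x})\big)$. Your perturbation by $\vct{u}_1$ --- the only place $\vct{0}\notin\partial f(\vct{x})$ is needed --- then extends this to every $\vct{u}$ with nonpositive support value, i.e.\ $\big(\partial f(\vct{x})\big)^\polar \subseteq \operatorname{closure}\big(\Desc(f,\vct{x})\big)$. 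Taking polars, and using that polarity is insensitive to closure together with the bipolar theorem (the bipolar of the convex set $\partial f(\vct{x})$ is $\overline{\cone}(\partial f(\vct{x}))$), gives $\Desc(f,\vct{x})^\polar \subseteq \overline{\cone}(\partial f(\vct{x}))$, which is the hard direction. Alternatively, you can simply cite \cite[Thm.~23.7]{Roc70:Convex-Analysis}, as the paper does.
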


\noindent
Combining Proposition~\ref{prop:weak-duality} and Fact~\ref{fact:polar-descent},
we reach a bound for the conic Gaussian width of a descent cone.

\begin{proposition}[The width of a descent cone] \label{prop:width-descent}
Let $f : \R^d \to \overline{\R}$ be a proper convex function, and fix a point $\vct{x} \in \R^d$.
Assume that the subdifferential $\partial f(\vct{x})$ is nonempty and does not contain the origin.  Then
\begin{equation*} w^2\big( \Desc(f, \vct{x}) \big) \leq  
	\Expect{} \inf_{\tau \geq 0} \ \dist^2\big( \vct{g}, \ \tau \cdot \partial f(\vct{x}) \big)
\end{equation*}
\end{proposition}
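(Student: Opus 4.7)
The plan is to combine the conic weak duality inequality of Proposition~\ref{prop:weak-duality} with the subdifferential polarity formula from Fact~\ref{fact:polar-descent}, and then pass from a first-moment bound to a second-moment bound via Jensen's inequality.

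First, fix a realization of the Gaussian vector $\vct{g} \sim \normal(\vct{0},\Id_d)$ and apply Proposition~\ref{prop:weak-duality} with $K = \Desc(f,\vct{x})$ and $\vct{x} \leftarrow \vct{g}$. This yields
$$
\sup_{\vct{u} \in \Desc(f,\vct{x}) \cap \sphere{d-1}} \ \ip{\vct{u}}{\vct{g}}
    \leq \dist\big(\vct{g}, \Desc(f,\vct{x})^\polar\big).
$$
Next, I would substitute the identification $\Desc(f,\vct{x})^\polar = \overline{\cone}(\partial f(\vct{x}))$ from Fact~\ref{fact:polar-descent}. Since the distance to a set equals the distance to its closure, and since the distance to a union is the infimum of the distances,
$$
\dist\big(\vct{g}, \Desc(f,\vct{x})^\polar\big)
    = \dist\bigg(\vct{g}, \bigcup_{\tau \geq 0} \tau \cdot \partial f(\vct{x})\bigg)
    = \inf_{\tau \geq 0} \dist\big(\vct{g}, \tau \cdot \partial f(\vct{x})\big).
$$
Taking expectations over $\vct{g}$ and using Definition~\ref{def:conic-gauss-width} produces the first-moment bound
$$
w\big(\Desc(f,\vct{x})\big) \leq \Expect \inf_{\tau \geq 0} \dist\big(\vct{g}, \tau \cdot \partial f(\vct{x})\big).
$$

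To upgrade this to the advertised second-moment bound, I would observe that both sides are nonnegative: the right-hand random variable $\inf_{\tau \geq 0} \dist(\vct{g}, \tau \cdot \partial f(\vct{x}))$ is clearly nonnegative, and the conic Gaussian width is nonnegative because the supremum over $\vct{u} \in \Desc(f,\vct{x})\cap\sphere{d-1}$ dominates $\ip{\vct{u}_0}{\vct{g}}$ for any fixed admissible $\vct{u}_0$, which has zero mean. Squaring the nonnegative inequality and then applying Jensen's inequality to the convex function $t \mapsto t^2$ gives
$$
w^2\big(\Desc(f,\vct{x})\big)
    \leq \Big(\Expect \inf_{\tau \geq 0} \dist\big(\vct{g}, \tau \cdot \partial f(\vct{x})\big)\Big)^2
    \leq \Expect \inf_{\tau \geq 0} \dist^2\big(\vct{g}, \tau \cdot \partial f(\vct{x})\big),
$$
which is the claim.

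The argument is short and there is no real obstacle; the only points to handle with a little care are that weak duality (Proposition~\ref{prop:weak-duality}) only gives an inequality rather than equality in the nonconvex setting, and that taking a closure inside the polar description from Fact~\ref{fact:polar-descent} does not affect the distance. The hypothesis that $\partial f(\vct{x})$ is nonempty and excludes the origin is what makes Fact~\ref{fact:polar-descent} applicable; it also guarantees that $\Desc(f,\vct{x})$ is not all of $\R^d$, so the supremum on the left is meaningful.
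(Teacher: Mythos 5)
Your proposal is correct and follows essentially the same route as the paper: weak conic duality (Proposition~\ref{prop:weak-duality}) combined with the polarity formula of Fact~\ref{fact:polar-descent}, the identification of the distance to the closed conic hull with $\inf_{\tau \geq 0}\dist(\vct{g},\tau\cdot\partial f(\vct{x}))$, and finally squaring plus Jensen's inequality. Your explicit note that both sides are nonnegative before squaring is a small point the paper leaves implicit, but the argument is the same.
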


\noindent
Several specific instances of Proposition~\ref{prop:width-descent} appear in~\cite[App.~C]{CRPW12:Convex-Geometry},
while the general statement here is adapted from~\cite[Sec.~4.1]{ALMT14:Living-Edge-II}.
Sections~\ref{sec:l1-descent} and~\ref{sec:S1-descent}
exhibit how Proposition~\ref{prop:width-descent} works.

\begin{proof}
Proposition~\ref{prop:weak-duality} implies that
$$
w\big( \Desc(f, \vct{x}) \big)
	= \Expect{} \sup_{\vct{u} \in \Desc(f, \vct{x}) \cap \sphere{d-1}} \
	\ip{ \vct{u} }{ \smash{\vct{g}} }
	\leq \Expect{} \dist\big( \vct{g}, \ \Desc(f, \vct{x})^\polar \big).
$$
The expression~\eqref{eqn:desc-polar} for the polar of a descent cone implies that
$$
w\big( \Desc(f, \vct{x}) \big)
	\leq \Expect{} \dist\left( \vct{g}, \
	\operatorname{closure}\left( \bigcup_{\tau \geq 0} \tau \cdot \partial f(\vct{x}) \right) \right)
	= \Expect{} \inf_{\tau \geq 0} \ \dist\big( \vct{g}, \ \tau \cdot \partial f(\vct{x}) \big).
$$
Indeed, the distance to a set is the same as the distance to its closure, and the distance to a union is the
infimal distance to one of its members.  Square the latter display, and apply Jensen's inequality
to complete the argument.
\end{proof}

\subsection{Example: Sparse vectors}
\label{sec:l1-descent}

Suppose that $\vct{x}^\natural$ is a vector in $\R^d$ with $s$ nonzero entries.
Let $\mtx{\Phi}$ be an $m \times d$ matrix whose rows are independent
random vectors distributed as $\normal(\vct{0}, \Id_d)$,
and suppose that we acquire a vector $\vct{y} = \mtx{\Phi} \vct{x}^\natural + \vct{e}$
consisting of $m$ noisy measurements.
We can solve the $\ell_1$-minimization problem~\eqref{eqn:l1-min} in
an attempt to reconstruct $\vct{x}^\natural$.

How many measurements are sufficient to ensure that this approach succeeds?
We will demonstrate that
\begin{equation} \label{eqn:sparse-width}
w^2\big( \Desc(\pnorm{\ell_1}{\cdot}, \vct{x}^\natural) \big)
	\leq 2s \log(d/s) + 2s. \end{equation}
Therefore, Corollary~\ref{cor:recovery-gauss} implies that
$m \gtrsim 2s \log(d/s)$ measurements are enough for us
to recover $\vct{x}^\natural$ approximately.  When $s \ll d$,
the first term in~\eqref{eqn:sparse-width} is numerically sharp
because of~\cite[Prop.~1]{FoyMac:13}.

\subsubsection{The width calculation}

Let us establish the width bound~\eqref{eqn:sparse-width}.
This analysis is adapted from~\cite[App.~C]{CRPW12:Convex-Geometry}
and~\cite[App.~D.2]{ALMT14:Living-Edge-II}; see also~\cite[App.~B]{FoyMac:13}.
The result~\cite[Prop.~4.5]{ALMT14:Living-Edge-II} contains a more complicated
formula for the width that is sharp for all choices of the sparsity $s$.

When estimating widths, a useful strategy is to change coordinates so that the calculations
are more transparent.  The $\ell_1$ norm is invariant under signed permutation, so
$$
\Desc( \pnorm{\ell_1}{\cdot}, \vct{x}^\natural )
	= \mtx{P} \, \Desc( \pnorm{\ell_1}{\cdot}, \mtx{P} \vct{x}^\natural )
	\quad\text{where $\mtx{P}$ is a signed permutation.}
$$
The distribution of a standard Gaussian random variable is invariant under signed permutation,
so the conic Gaussian width has the same invariance.  Therefore,
$$
w\big( \Desc( \pnorm{\ell_1}{\cdot}, \vct{x}^\natural ) \big)
	= w\big( \mtx{P} \, \Desc( \pnorm{\ell_1}{\cdot}, \mtx{P} \vct{x}^\natural ) \big)
	= w\big( \Desc( \pnorm{\ell_1}{\cdot}, \mtx{P} \vct{x}^\natural ) \big).
$$ 
We will use this type of transformation several times without detailed justification.

As a consequence of the argument in the last paragraph, we may assume that $\vct{x}^\natural$ takes the form
$$
\vct{x}^\natural = (x_1, \dots, x_s, 0, \dots, 0)^{\transp} \in \R^d
\quad\text{where}\quad
x_1 \geq \dots \geq x_s > 0.
$$
Proposition~\ref{prop:width-descent} ensures that
\begin{equation} \label{eqn:l1-width-1}
w^2\big( \Desc( \pnorm{\ell_1}{\cdot}, \vct{x}^\natural) \big)
	\leq \Expect{} \dist^2\big( \vct{g}, \ \tau \cdot \partial \pnorm{\ell_1}{\smash{\vct{x}^\natural}} \big)
	\quad\text{for each $\tau \geq 0$}
\end{equation}
where $\vct{g} \sim \normal(\vct{0}, \Id_d)$.
The subdifferential of the $\ell_1$ norm at $\vct{x}^\natural$ satisfies
$$
\partial \pnorm{\ell_1}{\smash{\vct{x}^\natural}}
	= \left\{ \begin{bmatrix} \vct{1}_s \\ \vct{y} \end{bmatrix} \in \R^d : \pnorm{\ell_\infty}{\smash{\vct{y}}} \leq 1 \right\}
	\quad\text{where}\quad
	\vct{1}_s := (1, \dots, 1)^\transp \in \R^s.
$$
Therefore,
\begin{equation} \label{eqn:l1-subdiff-dist}
\Expect{} \dist^2( \vct{g}, \ \tau \cdot \partial \pnorm{\ell_1}{\smash{\vct{x}^\natural}} )
	= \sum_{j = 1}^s \Expect{} \big(g_j - \tau \big)^2
	+ \sum_{j = s+1}^d \Expect{} \big[ \abs{\smash{g_j}} - \tau \big]_+^2.
\end{equation}
As usual, $[a]_+ := \max\{a, 0\}$.  For $1 \leq j \leq s$, a direct calculation gives
\begin{equation} \label{eqn:l1-nonzero-terms}
\Expect \big( g_j - \tau \big)^2
	= 1 + \tau^2.
\end{equation}
For $s < j \leq d$, we apply a familiar tail bound for the standard normal variable to obtain
\begin{equation} \label{eqn:l1-zero-terms}
\Expect \big[ \abs{\smash{g_j}} - \tau \big]_+^2
	= \int_{\tau}^\infty (a - \tau)^2 \, \Prob\big\{ \abs{\smash{g_j}} \geq a \big\} \idiff{a}
	\leq \int_{\tau}^\infty a^2 \left( \sqrt{\frac{2}{\pi}} \, a^{-1} \, \econst^{-a^2/2} \right) \idiff{a}
	< \econst^{-\tau^2/2}.
\end{equation}
Combine~\eqref{eqn:l1-width-1},~\eqref{eqn:l1-subdiff-dist},~\eqref{eqn:l1-nonzero-terms}, and~\eqref{eqn:l1-zero-terms}
to obtain
$$
w^2\big( \Desc( \pnorm{\ell_1}{\cdot}, \vct{x}^\natural) \big)
	\leq \Expect{} \dist^2( \vct{g}, \ \tau \cdot \partial \pnorm{1}{\smash{\vct{x}^\natural}} )
	= s \cdot \big( 1+ \tau^2 \big) + (d-s) \cdot \econst^{-\tau^2/2}.
$$
Choose $\tau^2 = 2\log(d/s)$ and simplify to reach~\eqref{eqn:sparse-width}.

\subsection{Example: Low-rank matrices}
\label{sec:S1-descent}

Let $\mtx{X}^\natural$ be a matrix in $\R^{d_1 \times d_2}$ with rank $r$.
Let $\mtx{\Phi} : \R^{d_1 \times d_2} \to \R^m$ be a linear operator whose
matrix has independent standard Gaussian entries.
Suppose we acquire $m$ noisy measurements of the form
$\vct{y} = \mtx{\Phi}(\mtx{X}^\natural) + \vct{e}$.
We can solve the $S_1$-minimization problem~\eqref{eqn:S1-min}
to reconstruct $\mtx{X}^\natural$.

How many measurements are enough to guarantee that this approach works?
We will prove that
\begin{equation} \label{eqn:rank-width}
w^2\big( \Desc(\pnorm{S_1}{\cdot}, \mtx{X}^\natural) \big)
	\leq 3r\cdot (d_1 + d_2 -r).
\end{equation}
As a consequence, Corollary~\ref{cor:recovery-gauss} implies that
$m \gtrsim 3r \cdot (d_1+ d_2 - r)$ measurements allow us
to identify $\mtx{X}^\natural$ approximately.

\subsubsection{The width calculation}

Let us establish the width bound~\eqref{eqn:rank-width}.
This analysis is adapted from~\cite[App.~C]{CRPW12:Convex-Geometry}
and~\cite[App.~D.3]{ALMT14:Living-Edge-II}; see also~\cite[App.~E]{FoyMac:13}.
The result~\cite[Prop.~4.6]{ALMT14:Living-Edge-II} contains a more complicated
formula for the width that is sharp whenever the rank $r$ is proportional to
the dimension $\min\{d_1, d_2\}$.

The Schatten 1-norm is unitarily invariant,
so we may also select a coordinate system where
$$
\mtx{X}^\natural = \begin{bmatrix} \mtx{\Sigma} & \mtx{0} \\ \mtx{0} & \mtx{0} \end{bmatrix}
	\quad\text{where}\quad
	\mtx{\Sigma} = \operatorname{diag}(\sigma_1, \dots, \sigma_r)
	\quad\text{and}\quad
	\text{$\sigma_j > 0$ for $j = 1, \dots, r$.}
$$
Let $\mtx{G}$ be a $d_1 \times d_2$ matrix with independent standard normal entries, partitioned as
$$
\mtx{G} = \begin{bmatrix} \mtx{G}_{11} & \mtx{G}_{12} \\ \mtx{G}_{21} & \mtx{G}_{22} \end{bmatrix}
	\quad\text{where}\quad
	\text{$\mtx{G}_{11}$ is $r \times r$\quad and\quad $\mtx{G}_{22}$ is $(d_1 - r) \times (d_2 - r)$.}
$$
Define a random parameter $\tau = \norm{ \mtx{G}_{22} }$, where $\norm{\cdot}$ denotes the spectral norm.
Proposition~\ref{prop:width-descent} ensures that
\begin{equation} \label{eqn:S1-width-1}
w^2\big( \Desc(\pnorm{S_1}{\cdot}, \mtx{X}^\natural) \big)
	\leq \Expect{} \dist^2_{\rm F}\big( \vct{G}, \ \tau \cdot \partial \pnorm{S_1}{\smash{\mtx{X}^\natural}} \big).
\end{equation}
Note that we must calculate distance with respect to the Frobenius norm $\fnorm{\cdot}$.
According to~\cite[Ex.~2]{Wat:92},
the subdifferential of the Schatten 1-norm takes the form
$$
\partial \pnorm{S_1}{\smash{\mtx{X}^\natural}} = \left\{ \begin{bmatrix} \Id_r & \mtx{0} \\ \mtx{0} & \mtx{Y} \end{bmatrix} \in \R^{d_1 \times d_2} :
	\norm{ \mtx{Y} }\leq 1 \right\}
	\quad\text{where\quad $\Id_r$ is the $r \times r$ identity matrix.}
$$
We may calculate that
\begin{align} \label{eqn:S1-subdiff-dist}
\Expect{} \dist_{\rm F}^2\big( \mtx{G}, \ \tau \cdot \pnorm{S_1}{\smash{\mtx{X}^\natural}} \big)
	= \Expect{} \fnormsq{ \mtx{G}_{11} - \tau \cdot \Id_r}
	+ \Expect{} \fnormsq{\mtx{G}_{12}} + \Expect{} \fnormsq{\mtx{G}_{21}}
	+ \Expect{} \inf_{\norm{\mtx{Y}} \leq 1}\ \fnormsq{ \mtx{G}_{22} - \tau \cdot \mtx{Y} }.
\end{align}
Our selection of $\tau$ ensures that the last term on the right-hand side of~\eqref{eqn:S1-subdiff-dist}
vanishes.  By direct calculation, \begin{equation} \label{eqn:S1-term-2}
\Expect{} \fnormsq{\mtx{G}_{12}} + \Expect{} \fnormsq{\mtx{G}_{21}}
		= r \cdot (d_1 + d_2 - 2r).
\end{equation}
To bound the first term on right-hand side of~\eqref{eqn:S1-subdiff-dist},
observe that
\begin{equation} \label{eqn:S1-term-1}
\Expect{} \fnormsq{ \mtx{G}_{11} - \tau \cdot \Id_r}
	= r^2 + r \cdot \Expect{} \tau^2
\end{equation}
because the random variable $\tau$ is independent from $\mtx{G}_{11}$.
We need to compute $\Expect{} \tau^2 = \Expect{} \fnormsq{\mtx{G}_{22}}$.  A short argument~\cite[Sec.~2.3]{DS01:Local-Operator}
based on the Slepian comparison inequality shows that
$$
\Expect{} \norm{\mtx{G}_{22}} \leq \sqrt{d_1 - r} + \sqrt{d_2 - r}
	\leq \sqrt{2(d_1 + d_2 - 2r)}.
$$
The spectral norm is 1-Lipschitz, so
the Gaussian Poincar{\'e} inequality~\cite[Thm.~3.20]{BLM13:Concentration-Inequalities} implies
$$
\Expect{} \normsq{\mtx{G}_{22}} - \big( \Expect \norm{\mtx{G}_{22}} \big)^2
 = \Var\big(\norm{\mtx{G}_{22}}\big) \leq 1.
$$
Combining the last two displays,
\begin{equation} \label{eqn:S1-expect-tau}
\Expect{} \tau^2
	= \Expect{} \normsq{\mtx{G}_{22}}
	\leq \big( \Expect{} \norm{\mtx{G}_{22}} \big)^2 + 1
	\leq 2 \, (d_1 + d_2 - 2r) + 1.
\end{equation}
Finally, we incorporate~\eqref{eqn:S1-subdiff-dist},~\eqref{eqn:S1-term-1},~\eqref{eqn:S1-term-2}, and~\eqref{eqn:S1-expect-tau} into the width bound~\eqref{eqn:S1-width-1} to reach
$$
w^2\big( \Desc( \pnorm{S_1}{\cdot}, \mtx{X}^\natural ) \big)
	\leq 3r \cdot (d_1 + d_2 - 2r) + r^2 + r.
$$
Simplify this expression to obtain the result~\eqref{eqn:rank-width}.

\section{Mendelson's Small Ball Method}
\label{sec:beyond-gauss}

In Sections~\ref{sec:convex-recovery}--\ref{sec:width-descent},
we analyzed a convex programming method
for recovering structured signals from standard Gaussian measurements.
The main result, Corollary~\ref{cor:recovery-gauss}, is appealing because it applies to any convex
complexity measure $f$.  Proposition~\ref{prop:width-descent} allows
us to instantiate this result because it provides a mechanism for controlling
the Gaussian width of a descent cone.  On the other hand, this approach only works
when the sampling matrix $\mtx{\Phi}$ follows the
standard Gaussian distribution.

For other sampling models, researchers use a variety of ad hoc techniques to study
the recovery problem.  It is common to see a separate and intricate argument for
each new complexity measure $f$ and each new distribution for $\mtx{\Phi}$.
It is natural to wonder whether there is a single approach that can address
a broad class of complexity measures and sampling matrices.

The primary goal of this chapter is to analyze convex signal reconstruction
with more general random measurements.  Our argument is based on Mendelson's
\term{Small Ball Method}, a powerful strategy for establishing a lower
bound on a nonnegative empirical process~\cite{KM13:Bounding-Smallest,Men13:Remark-Diameter,Men14:Learning-Concentration,LM14:Compressed-Sensing,Men14:Learning-Concentration-II}.
This section contains an overview of Mendelson's Small Ball Method.
Section~\ref{sec:subgauss-model} uses this technique to study subgaussian measurement models.
In Section~\ref{sec:bowling}, we extend these ideas to a larger class of sampling distributions.
In Section~\ref{sec:phase}, we conclude with an application to the problem of
phase retrieval.

\subsection{The minimum conic singular value as a nonnegative empirical process}
\label{sec:conic-sing-empirical}

Suppose that $\vct{\phi}$ is a random vector on $\R^d$, and draw independent copies $\vct{\phi}_1, \dots, \vct{\phi}_m$ of the random vector $\vct{\phi}$.  Form an $m \times d$ sampling matrix $\mtx{\Phi}$ whose rows are these random vectors:
\begin{equation} \label{eqn:indep-rows}
\mtx{\Phi} = \begin{bmatrix} & \vct{\phi}_1^\transp & \\ & \vdots & \\ & \vct{\phi}_m^\transp & \end{bmatrix}.
\end{equation}
Fix a cone $K \in \R^d$, not necessarily convex, and define the set $E := K \cap \sphere{d-1}$.  Then we can express the minimum conic singular value
$\lambda_{\min}(\mtx{\Phi}; K)$ of the sampling matrix as a nonnegative empirical process:
\begin{equation} \label{eqn:conic-sing-sum}
\lambda_{\min}(\mtx{\Phi}; K)
	= \inf_{\vct{u} \in E} \ \left( \sum_{i=1}^m \abssqip{\vct{u}}{\smash{\vct{\phi}_i}} \right)^{1/2}.
\end{equation}
When the sampling matrix is Gaussian, we can use Gordon's theorem~\cite[Thm.~1.4]{Gor85:Some-Inequalities}
to obtain a lower bound for the expression~\eqref{eqn:conic-sing-sum}, as in Proposition~\ref{prop:conic-sing-gauss}.
The challenge is to find an alternative method for producing
a lower bound in a more general setting.

\subsection{A lower bound for nonnegative empirical processes}

The main technical component in Mendelson's Small Ball Method is a remarkable estimate that
was developed in the paper~\cite{Men14:Learning-Concentration}.  This result delivers an
effective lower bound for a nonnegative empirical process.

\begin{proposition}[Lower bound for a nonnegative empirical process~\protect{\cite[Thm.~5.4]{Men14:Learning-Concentration}}] \label{prop:km}
Fix a set $E \subset \R^d$.  Let $\vct{\phi}$ be a random vector on $\R^d$, and let $\vct{\phi}_1, \dots, \vct{\phi}_m$ be independent copies of $\vct{\phi}$.  Define the $m \times d$ matrix $\mtx{\Phi}$ as in~\eqref{eqn:indep-rows}.  Introduce the marginal tail function
$$
Q_\xi(E; \vct{\phi}) := \inf_{\vct{u} \in E} \
\Prob\big\{ \absip{\vct{u}}{\smash{\vct{\phi}}} \geq \xi \big\}
\quad\text{where $\xi \geq 0$.}
$$
Let $\eps_1, \dots, \eps_m$ be independent Rademacher random variables,\footnote{A \term{Rademacher} random variable takes the two values $\pm 1$ with equal probability.}
independent from everything else, and define the mean empirical width of the set:
\begin{equation} \label{eqn:mean-emp-width}
W_{m}(E; \vct{\phi}) := \Expect{} \sup_{\vct{u} \in E} \ \ip{\vct{u}}{\vct{h}}
\quad\text{where}\quad
\vct{h} := \frac{1}{\sqrt{m}} \sum_{i=1}^m \eps_i \vct{\phi}_i.
\end{equation}
Then, for any $\xi > 0$ and $t > 0$,
$$
	\inf_{\vct{u} \in E } \ \left( \sum_{i=1}^m \abssqip{\vct{u}}{ \smash{\vct{\phi}_i }}
	\right)^{1/2} \geq \xi \sqrt{m} \,
	Q_{2\xi}(E; \vct{\phi}) - 2 W_{m}(E; \vct{\phi}) - \xi t
$$
with probability at least $1 - \econst^{-t^2 / 2}$.
\end{proposition}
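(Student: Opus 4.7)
The plan is the standard three-step argument behind Mendelson's small-ball framework: a soft-indicator reduction that turns the nonnegative empirical process into a \emph{bounded} one, control of the expectation of the resulting bounded process by symmetrization and contraction, and a high-probability tail bound by bounded differences.

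\textbf{Reduction via a soft indicator.} I would introduce an even, piecewise-linear, $1$-Lipschitz ``ramp'' $\psi:\R\to[0,1]$ with $\psi(t)=0$ on $[-1,1]$ and $\psi(t)=1$ outside $[-2,2]$. This single function satisfies the sandwich $\mathbf{1}\{|x|\ge 2\xi\}\le \psi(x/\xi)\le \mathbf{1}\{|x|\ge \xi\}$ and the pointwise domination $|x|\ge \xi\,\psi(x/\xi)$. For $\vct{u}\in E$ set $a_i(\vct{u}):=|\ip{\vct{u}}{\vct{\phi}_i}|$; Cauchy--Schwarz in the form $\sum_i a_i\le \sqrt m\,(\sum_i a_i^2)^{1/2}$ then yields
\[
\Bigl(\sum_i a_i(\vct{u})^2\Bigr)^{1/2}\ge \frac{1}{\sqrt m}\sum_i a_i(\vct{u})\ge \frac{\xi}{\sqrt m}\,S(\vct{u}), \quad S(\vct{u}) := \sum_{i=1}^m \psi\bigl(\ip{\vct{u}}{\vct{\phi}_i}/\xi\bigr).
\]
The problem reduces to lower-bounding $\inf_{\vct{u}\in E}S(\vct{u})$, a nonnegative empirical process whose summands lie in $[0,1]$.

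\textbf{Expectation and concentration.} The lower sandwich $\psi(x/\xi)\ge \mathbf{1}\{|x|\ge 2\xi\}$ together with the definition of $Q_{2\xi}$ gives $\Expect S(\vct{u})\ge m\,Q_{2\xi}(E;\vct{\phi})$ for every $\vct{u}\in E$, so $\inf_{\vct{u}\in E}S(\vct{u})\ge m Q_{2\xi}-\Delta$ where $\Delta:=\sup_{\vct{u}\in E}(\Expect S(\vct{u})-S(\vct{u}))$. To bound $\Expect\Delta$ I would apply Gin\'e--Zinn symmetrization followed by the Ledoux--Talagrand contraction principle: because $t\mapsto \psi(t/\xi)$ is $(1/\xi)$-Lipschitz and vanishes at $0$, contraction strips off the soft indicator and leaves a Rademacher process in the \emph{linear} functionals $\vct{u}\mapsto\ip{\vct{u}}{\vct{\phi}_i}$, giving (after optimizing constants) $\Expect\Delta \leq (2\sqrt m/\xi)\,W_m(E;\vct{\phi})$. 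Since $\psi\in[0,1]$, altering any single $\vct{\phi}_i$ perturbs $\Delta$ by at most $1$, so McDiarmid's bounded-differences inequality yields $\Delta\le \Expect\Delta+t\sqrt m$ with probability at least $1-\econst^{-t^2/2}$. Chaining these estimates,
\[
\inf_{\vct{u}\in E}\Bigl(\sum_i \abssqip{\vct{u}}{\vct{\phi}_i}\Bigr)^{1/2} \ge \frac{\xi}{\sqrt m}\Bigl(m Q_{2\xi}-\tfrac{2\sqrt m}{\xi}W_m - t\sqrt m\Bigr) = \xi\sqrt m\,Q_{2\xi}(E;\vct{\phi}) - 2W_m(E;\vct{\phi}) - \xi t,
\]
which is the advertised bound.

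\textbf{Main obstacle.} The delicate choice is the soft indicator $\psi$: it must simultaneously be bounded (so McDiarmid applies cleanly), Lipschitz with a sharp constant (so contraction does not lose a factor), dominate $\mathbf{1}\{|x|\ge 2\xi\}$ from above (so the probabilistic content enters as the marginal tail $Q_{2\xi}$ rather than a weaker quantile), and be dominated by $\xi^{-1}|x|$ (so Cauchy--Schwarz delivers the prefactor $\xi\sqrt m$ without extra slack). A ramp supported on $[\xi,2\xi]$ balances these four constraints; matching the precise coefficients ``$2$'' and ``$1$'' in the statement then follows from careful bookkeeping in the symmetrization and contraction steps, together with the fact that $\psi$ is even so no extra factor is lost in the Rademacher comparison.
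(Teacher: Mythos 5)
Your argument is correct and is essentially the paper's own proof: the same Cauchy--Schwarz/Markov-type reduction to a bounded process via a ramp function supported on $[\xi,2\xi]$, the same Gin\'e--Zinn symmetrization followed by the Ledoux--Talagrand contraction principle to reach $W_m(E;\vct{\phi})$, and the same bounded-differences (McDiarmid) step for the $\xi t$ term. The only cosmetic difference is that you center $S(\vct{u})$ at its mean and bound $\Expect S(\vct{u})\ge m\,Q_{2\xi}$ directly, whereas the paper adds and subtracts $Q_{2\xi}(\vct{u})$ against hard indicators before bracketing with the soft indicator inside the expectation; also note it is the contraction property of $\xi\psi_\xi$ (1-Lipschitz, vanishing at $0$), not evenness, that makes the Rademacher comparison lossless.
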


\noindent
The proof appears below in Section~\ref{sec:km}.  In the sequel, we usually lighten our notation for $Q_\xi$ and $W_m$ by suppressing the dependence on $\vct{\phi}$.

Before we continue, it may be helpful to remark on this result.
The marginal tail function $Q_\xi(E)$ reflects the probability that the random variable $\absip{\vct{u}}{\smash{\vct{\phi}}}$ is close to zero for any fixed vector $\vct{u} \in E$.  When $Q_\xi(E)$ is bounded away from zero for some $\xi$, the nonnegative empirical process is likely to be large.  Koltchinskii \& Mendelson~\cite{KM13:Bounding-Smallest} point out that the marginal tail function reflects the absolute continuity of the distribution of $\vct{\phi}$, so $Q_\xi$ may be quite small when $\vct{\phi}$ is ``spiky.''

The mean empirical width $W_m(E)$ is a distribution-dependent measure of the size of the set $E$.  When $\vct{\phi}$ follows a standard Gaussian distribution, $W_m(E)$ reduces to the usual Gaussian width $W(E) := \Expect{} \sup_{\vct{u} \in E} \ \ip{\vct{u}}{\smash{\vct{g}}}$.  As the number $m$ tends to infinity, the distribution of the random vector $\vct{h}$ converges in distribution to a centered Gaussian variable with covariance $\Expect[ \vct{\phi} \vct{\phi}^\adj ]$.  Therefore, $W_m(E) \to W(E)$ when $\vct{\phi}$ is centered and isotropic.

\subsection{Mendelson's Small Ball Method}

Proposition~\ref{prop:km} shows that we can obtain a lower bound for~\eqref{eqn:conic-sing-sum}
by performing two simpler estimates.  To achieve this goal, Mendelson has developed a general strategy,
which consists of three steps:

\begin{center}
\begin{table}[h!]
\framebox{
\begin{minipage}{0.9\textwidth}
\vspace{0.5pc}
\centering{\textsc{\textbf{Mendelson's Small Ball Method}}} \vspace{0.5pc}
\begin{enumerate} \setlength{\itemsep}{2mm}
\item	Apply Proposition~\ref{prop:km} to bound the
minimum conic singular value $\lambda_{\min}\big( \mtx{\Phi}; \ K \big)$ below in terms of
the marginal tail function $Q_{2\xi}(E; \vct{\phi} )$
and the mean empirical width $W_m( E; \vct{\phi} )$.
The index set $E := K \cap \sphere{d-1}$.

\item	Bound the marginal tail function $Q_{2\xi}(E; \vct{\phi})$ below using a Paley--Zygmund inequality.

\item	Bound the mean empirical width $W_m(E; \vct{\phi})$ above by imitating techniques for
controlling the Gaussian width of $E$.
\end{enumerate}
\vspace{1pc}
\end{minipage}
\hspace{0.25in}}
\end{table}
\end{center}

\noindent
This presentation is distilled from the corpus~\cite{KM13:Bounding-Smallest,Men13:Remark-Diameter,Men14:Learning-Concentration,LM14:Compressed-Sensing,Men14:Learning-Concentration-II}.  A more sophisticated variant of this
method appears in~\cite[Thm.~5.3]{Men14:Learning-Concentration}.  Later in this chapter,
we will encounter several concrete applications of this strategy.

\subsection{Expected Scope}

Mendelson's Small Ball Method provides lower bounds for~\eqref{eqn:conic-sing-sum} in many
situations, but it does not offer a universal prescription.  Let us try to delineate
the circumstances where this approach is likely to be useful for signal recovery problems.

\begin{itemize} \setlength{\itemsep}{2mm}
\item	Mendelson's Small Ball Method assumes that the sampling matrix $\mtx{\Phi}$ has independent, identically
distributed rows.  Although this model describes many of the sampling strategies in the literature,
there are some examples, such as random filtering~\cite{TWDBB06:Random-Filters}, that do not conform
to this assumption.

\item	A major advantage of Mendelson's Small Ball Method is that it applies to sampling distributions with heavy
tails.  On the other hand, the random vector $\vct{\phi}$ cannot be too ``spiky,'' or else it may
not be possible to produce a good lower bound for the marginal tail function $Q_{2\xi}(E)$.
This requirement indicates that the approach may require significant
improvements before it applies to problems like matrix completion.
\end{itemize}

\noindent
There are a number of possible extensions of Mendelson's Small Ball Method that could expand its bailiwick.
For example, it is easy to extend Proposition~\ref{prop:km}
to address the case where the random vector $\vct{\phi}$ is complex-valued.
A more difficult, but very useful, modification would allow us to block the measurements into groups.
This revision could reduce the difficulties associated with spiky distributions, but it seems to demand
some additional ideas.

\subsection{Proof of Proposition~\ref{prop:km}}
\label{sec:km}

Let us establish the Mendelson bound for a nonnegative empirical process.  First, we introduce a directional version of the marginal tail function:
$$
Q_\xi(\vct{u}) := \Prob\big\{ \absip{\vct{u}}{\smash{\vct{\phi}}} \geq \xi \big\}
\quad\text{for $\vct{u} \in E$ and $\xi > 0$.}
$$
Lyapunov's inequality and Markov's inequality give the numerical bounds
$$
\left( \frac{1}{m} \sum_{i=1}^m \abssqip{\vct{u}}{\smash{\vct{\phi}_i}} \right)^{1/2}
	\geq \frac{1}{m} \sum_{i=1}^m \absip{\vct{u}}{\smash{\vct{\phi}_i}}
	\geq \frac{\xi}{m} \sum_{i=1}^m \mathbb{1} \big\{ \absip{\vct{u}}{\smash{\vct{\phi}_i}} \geq \xi \big\}.
$$
We write $\mathbb{1}A$ for the 0--1 random variable that indicates whether the event $A$ takes place.  Add and subtract $Q_{2\xi}(\vct{u})$ inside the sum, and then take the infimum over $\vct{u} \in E$ to reach the inequality
\begin{equation} \label{eqn:inf-sure-bd}
\inf_{\vct{u} \in E} \ \left( \frac{1}{m} \sum_{i=1}^m \abssqip{\vct{u}}{\smash{\vct{\phi}_i}}
	\right)^{1/2}
	\geq \xi \inf_{\vct{u} \in E} Q_{2\xi}(\vct{u}) - \frac{\xi}{m} \sup_{\vct{u} \in E} \
	\sum_{i=1}^m \big[ Q_{2\xi}(\vct{u}) -
	\mathbb{1}\big\{ \absip{\vct{u}}{\smash{\vct{\phi}_i} } \geq \xi \big\} \big].
\end{equation}
To control the supremum in probability, we can invoke the bounded difference inequality~\cite[Sec.~6.1]{BLM13:Concentration-Inequalities}.  Observe that each summand is independent and bounded in magnitude by one.  Therefore,
\begin{equation} \label{eqn:sup-prob-bd}
\sup_{\vct{u} \in E}\ \sum_{i=1}^m \big[ Q_{2\xi}(\vct{u}) -
	\mathbb{1}\big\{ \absip{\vct{u}}{\smash{\vct{\phi}_i} } \geq \xi \big\} \big] \\
	\leq \Expect{} \sup_{\vct{u} \in E} \
	\sum_{i=1}^m \big[ Q_{2\xi}(\vct{u}) -
	\mathbb{1}\big\{ \absip{\vct{x}}{\smash{\vct{\phi}_i} } \geq \xi \big\} \big]
	+ t \sqrt{m}
\end{equation}
with probability at least $1 - \econst^{- t^2 / 2}$.

Next, we simplify the expected supremum.  Introduce a soft indicator function:
$$
\psi_\xi : \R \to [0,1]
\quad\text{where}\quad
\psi_\xi(s) := \begin{cases}
	0, & \abs{s} \leq \xi \\
	(\abs{s}-\xi)/\xi, & \xi < \abs{s} \leq 2\xi \\
	1, & 2\xi < \abs{s}.
\end{cases}
$$
We need two properties of the soft indicator.
First, the soft indicator is bracketed by two hard indicators: $\mathbb{1}\{\abs{s} \geq 2\xi \} \leq \psi_\xi(s) \leq \mathbb{1}\{\abs{s} \geq \xi \}$ for all $s \in \R$.  Second, $\xi \psi_\xi$ is a \term{contraction}, i.e., a 1-Lipschitz function on $\R$ that fixes the origin.
Therefore, we can make the following calculation:
\begin{align} \label{eqn:sup-expect-bd}
\Expect{} \sup_{\vct{u} \in E} \
	\sum_{i=1}^m \big[ Q_{2\xi}(\vct{u}) -
	\mathbb{1}\big\{ \absip{\vct{u}}{\smash{\vct{\phi}_i} } \geq \xi \big\} \big]
&= \Expect{} \sup_{\vct{u} \in E} \
	\sum_{i=1}^m \left[ \Expect \mathbb{1}\big\{ \absip{\vct{u}}{\smash{\vct{\phi}} } \geq 2\xi \big\} -
	\mathbb{1}\big\{ \absip{\vct{u}}{\smash{\vct{\phi}_i} } \geq \xi \big\} \right] \notag \\
&\leq \Expect{} \sup_{\vct{u} \in E} \
	\sum_{i=1}^m \left[ \Expect \psi_\xi( \ip{\vct{u}}{\smash{\vct{\phi}}} ) 
	- \psi_\xi( \ip{\vct{u}}{\smash{\vct{\phi}_i}} ) \right] \notag \\
&\leq 2 \Expect{} \sup_{\vct{u} \in E} \
	\sum_{i=1}^m \eps_i \psi_\xi(\ip{\vct{u}}{\smash{\vct{\phi}_i}}) \notag \\
&\leq \frac{2}{\xi} \Expect{} \sup_{\vct{u} \in E} \
	\sum_{i=1}^m \eps_i \ip{\vct{u}}{\smash{\vct{\phi}_i}}. 
\end{align}
In the first line, we write the marginal tail function as an expectation, and then we bound the two indicators using the soft indicator function.  The next inequality is the Gin{\'e}--Zinn symmetrization~\cite[Lem.~2.3.1]{VW96:Weak-Convergence}.  The last line follows from the Rademacher comparison principle~\cite[Eqn.~(4.20)]{LT91:Probability-Banach} because $\xi \psi_\xi$ is a contraction.

Combine the inequalities~\eqref{eqn:inf-sure-bd},~\eqref{eqn:sup-prob-bd}, and~\eqref{eqn:sup-expect-bd} to reach
$$
\inf_{\vct{u} \in E} \ \left( \frac{1}{m} \sum_{i=1}^m \abssqip{\vct{u}}{\smash{\vct{\phi}_i}}
	\right)^{1/2}
	\geq \xi \inf_{\vct{u} \in E} Q_{2\xi}(\vct{u}) - \frac{\xi}{m}
	\left[ \frac{2}{\xi} \Expect{} \sup_{\vct{u} \in E} \
	\sum_{i=1}^m \eps_i \ip{\vct{u}}{\smash{\vct{\phi}_i}} + t \sqrt{m} \right].
$$
Define $\vct{h} := m^{-1/2} \sum_{i=1}^m \eps_i \vct{\phi}_i$, and clear the factor $\sqrt{m}$ to conclude that
$$
\inf_{\vct{u} \in E} \ \left( \sum_{i=1}^m \abssqip{\vct{u}}{\smash{\vct{\phi}_i}}
	\right)^{1/2}
	\geq \xi \sqrt{m} \inf_{\vct{u} \in E} Q_{2\xi}(\vct{u})
	- 2 \Expect{} \sup_{\vct{u} \in E} \ \ip{\vct{u}}{\vct{h}} - \xi t.
$$
with probability at least $1 - \econst^{-t^2/2}$.  Identify the marginal tail function $Q_{2\xi}(E)$ and the empirical width $W_m(E)$ to establish Proposition~\ref{prop:km}.

\section{A universal error bound for subgaussian measurements}
\label{sec:subgauss-model}

In this section, we invoke Mendelson's Small Ball Method to study convex signal
recovery from independent subgaussian measurements.  This class of examples
provides a wide generalization of standard Gaussian measurements.  We will establish
a variant of the Gaussian recovery result, Corollary~\ref{cor:recovery-gauss},
in this setting.

\subsection{Subgaussian measurements}
\label{sec:subgauss}

Let us set out the conditions we require for the sampling matrix.
Suppose that $\vct{\phi}$ is a random vector in $\R^d$ that has the following properties.

\begin{itemize} \setlength{\itemsep}{2mm}
\item	\textbf{[Centering]}  The vector has zero mean: $\Expect{} \vct{\phi} = \vct{0}$.

\item	\textbf{[Nondegeneracy]}  There is a positive constant $\alpha$ for which 
$$
\alpha \leq \Expect{} \absip{ \vct{u} }{ \smash{\vct{\phi}} }
\quad\text{for each $\vct{u} \in \sphere{d-1}$.}
$$

\item	\textbf{[Subgaussian marginals]}  There is a positive constant $\sigma$ for which
$$
\Prob\big\{ \absip{ \vct{u} }{ \smash{\vct{\phi}} } \geq t \big\}
	\leq 2 \econst^{-t^2/(2\sigma^2)}
	\quad\text{for each $\vct{u} \in \sphere{d-1}$.}
$$

\item	\textbf{[Low eccentricity]}  The eccentricity $\rho := \sigma / \alpha$ of the distribution should be small.
\end{itemize}

\noindent
Finally, we construct a random $m \times d$ sampling matrix $\mtx{\Phi}$
whose rows are independent copies of $\vct{\phi}^\transp$,
as in the expression~\eqref{eqn:indep-rows}.

A few examples of subgaussian distributions may be helpful.

\begin{example}[Nonstandard Gaussian matrices]
Suppose that $\vct{\phi} \in \R^d$ follows the $\normal(\vct{0}, \mtx{\Sigma})$ distribution where the covariance $\mtx{\Sigma}$ satisfies $\tfrac{\pi}{2} \alpha^2 \leq \vct{u}^\transp \mtx{\Sigma} \vct{u} \leq \sigma^2$  for each vector $\vct{u} \in \sphere{d-1}$.  Then the required conditions follow from basic facts about a normal distribution.
\end{example}

\begin{example}[Independent bounded entries]
Let $X$ be a symmetric random variable whose magnitude is bounded by $\sigma$.  Suppose that each entry of $\vct{\phi}$ is an independent copy of $X$.

The vector $\vct{\phi}$ inherits centering from $X$.  Next, $\vct{\phi}$ is nondegenerate with $\alpha \geq 2^{-1/2} \Expect \abs{X}$ because of the Khintchine inequality~\cite{LO94:Best-Constant} and a convexity argument.  Finally, $\vct{\phi}$ has subgaussian marginals with the parameter $\sigma$ because of Hoeffding's inequality~\cite[Sec.~2.6]{BLM13:Concentration-Inequalities}.
\end{example}

\subsection{The minimum conic singular value of a subgaussian matrix}
\label{sec:conic-sing-subgauss}

The main result of this section gives a lower bound for the minimum conic singular value of a matrix $\mtx{\Phi}$ that satisfies the conditions in Section~\ref{sec:subgauss}.

\begin{theorem}[Minimum conic singular value of a subgaussian matrix] \label{thm:conic-sing-subgauss}
Suppose $\mtx{\Phi}$ is an $m \times d$ random matrix that satisfies the conditions in Section~\ref{sec:subgauss}.
Let $K \subset \R^d$ be a cone, not necessarily convex.  Then
$$
\lambda_{\min}(\mtx{\Phi}; K) \geq c \alpha \rho^{-2} \cdot \sqrt{m} - C \sigma \cdot w(K) - \alpha t
$$
with probability at least $1 - \econst^{-ct^2}$.
The quantities $c$ and $C$ are positive absolute constants.
\end{theorem}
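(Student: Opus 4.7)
My plan is to execute the three-step recipe of Mendelson's Small Ball Method for the set $E := K \cap \sphere{d-1}$, controlling the marginal tail function $Q_{2\xi}(E;\vct{\phi})$ from below and the mean empirical width $W_m(E;\vct{\phi})$ from above in terms of the structural parameters $\alpha$, $\sigma$, and $w(K)$, and then assembling via Proposition~\ref{prop:km}.

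For the tail function, I would apply a Paley--Zygmund argument to the nonnegative random variable $Z := \absip{\vct{u}}{\smash{\vct{\phi}}}$ for each $\vct{u} \in E$. Nondegeneracy supplies the first-moment lower bound $\Expect Z \geq \alpha$, while the subgaussian marginal hypothesis, integrated against the tail, yields a second-moment upper bound of the form $\Expect Z^2 \leq C\sigma^2$. Paley--Zygmund with parameter $\theta = 1/2$ then gives
$$
\Prob\bigl\{\absip{\vct{u}}{\smash{\vct{\phi}}} \geq \alpha/2\bigr\} \;\geq\; \tfrac{1}{4}\,\frac{(\Expect Z)^2}{\Expect Z^2} \;\geq\; c\rho^{-2}.
$$
Taking an infimum over $\vct{u} \in E$ and choosing $\xi := \alpha/4$ produces the uniform lower bound $Q_{2\xi}(E;\vct{\phi}) \geq c\rho^{-2}$.

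For the empirical width, I would show that the symmetrized vector $\vct{h} = m^{-1/2}\sum_i \eps_i \vct{\phi}_i$ has subgaussian marginals with parameter $O(\sigma)$, independent of $m$. This follows from the subgaussian version of Khintchine's inequality (or equivalently, a bound on the Orlicz-$\psi_2$ norm): each summand $\eps_i \ip{\vct{u}}{\smash{\vct{\phi}_i}}$ has $\psi_2$-norm $\lesssim \sigma$, so the renormalized sum does too. Consequently, the centered process $\vct{u} \mapsto \ip{\vct{u}}{\smash{\vct{h}}}$ has $O(\sigma)$-subgaussian increments with respect to the Euclidean metric, and Talagrand's majorizing-measures/generic-chaining comparison theorem delivers
$$
W_m(E;\vct{\phi}) \;\leq\; C\sigma \cdot \Expect \sup_{\vct{u}\in E} \ip{\vct{u}}{\smash{\vct{g}}} \;=\; C\sigma \cdot w(K).
$$

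Plugging both estimates into Proposition~\ref{prop:km} with $\xi = \alpha/4$ yields
$$
\lambda_{\min}(\mtx{\Phi};K) \;\geq\; \tfrac{\alpha}{4}\sqrt{m}\cdot c\rho^{-2} - 2C\sigma w(K) - \tfrac{\alpha}{4}t
$$
with probability at least $1 - e^{-t^2/2}$. Rescaling $t$ by an absolute factor absorbs the $1/4$ and converts the probability to the stated form $1 - e^{-ct^2}$. The main technical obstacle is the second step: one must exploit the fact that the subgaussian constant of $\vct{h}$ is inherited from a \emph{single} copy of $\vct{\phi}$ rather than scaling with $\sqrt{m}$, because the entire argument collapses if the empirical width is bounded only by $C\sigma\sqrt{m}\cdot w(K)$ instead of $C\sigma\cdot w(K)$. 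The Paley--Zygmund step is essentially automatic once one knows the second moment is comparable to the first moment squared.
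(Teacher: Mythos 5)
Your proposal is correct and follows essentially the same route as the paper: Proposition~\ref{prop:km}, then Paley--Zygmund with $\Expect\absip{\vct{u}}{\smash{\vct{\phi}}} \geq \alpha$ and the tail-integrated bound $\Expect\abssqip{\vct{u}}{\smash{\vct{\phi}}} \leq 4\sigma^2$, then the observation that $\vct{h}$ has $O(\sigma)$-subgaussian marginals combined with Talagrand's generic chaining and majorizing measure theorems to get $W_m(E) \leq C\sigma\, w(K)$. The only differences are cosmetic: your choice $\xi = \alpha/4$ versus the paper's $\xi = \alpha/6$, and a slightly different (equivalent) form of the second-moment inequality.
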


\noindent
Observe that, when the eccentricity $\rho$ has constant order, the bound in Theorem~\ref{thm:conic-sing-subgauss} matches the result for Gaussian matrices in Proposition~\ref{prop:conic-sing-gauss}.  A similar result appears in the paper~\cite{MPT07:Reconstruction-Subgaussian}, so we do not claim any novelty.
We establish Theorem~\ref{thm:conic-sing-subgauss} below in Section~\ref{sec:conic-sing-subgauss}.

\subsection{An error bound for subgaussian measurements}

Combining Proposition~\ref{prop:deterministic-error} and Theorem~\ref{thm:conic-sing-subgauss}, we reach an immediate consequence for signal recovery from subgaussian measurements.

\begin{corollary}[Signal recovery from subgaussian measurements]
\label{cor:recovery-subgauss}
Let $\vct{x}^\natural$ be a signal in $\R^d$.  Let $\mtx{\Phi}$ be an $m \times d$ random matrix that satisfies the conditions in Section~\ref{sec:subgauss}, and let $\vct{y} = \mtx{\Phi} \vct{x}^\natural + \vct{e}$ be a vector of measurements in $\R^m$.  With probability at least $1 - \econst^{-ct^2}$, the following statement holds.  Assume that $\enorm{\vct{e}} \leq \eta$, and let $\widehat{\vct{x}}_{\eta}$ be any solution to the optimization problem~\eqref{eqn:convex-recovery}.  Then
$$
\smnorm{}{ \widehat{\vct{x}}_{\eta} - \vct{x}^\natural } \leq \frac{2\eta}{\big[ c \alpha \rho^{-2} \cdot \sqrt{m} - C \sigma \cdot w\big(\Desc(f, \vct{x}^\natural) \big) - \alpha t \big]_+ }.
$$
The quantities $c$ and $C$ are positive absolute constants.  The operation $[a]_+ := \max\{a,0\}$ returns the positive part of a number.
\end{corollary}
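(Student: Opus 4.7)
The plan is to combine the two main ingredients already in hand: the deterministic error bound of Proposition~\ref{prop:deterministic-error}, which controls $\smnorm{}{\widehat{\vct{x}}_\eta - \vct{x}^\natural}$ in terms of $1/\lambda_{\min}(\mtx{\Phi};\,\Desc(f,\vct{x}^\natural))$, and the high-probability lower bound on $\lambda_{\min}$ for subgaussian matrices from Theorem~\ref{thm:conic-sing-subgauss}. The corollary is flagged as ``immediate'' because the composition is essentially mechanical; the only piece that takes a little care is keeping the randomness in the sampling matrix properly separated from the deterministic analysis for the error vector.

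First I would instantiate Theorem~\ref{thm:conic-sing-subgauss} with $K = \Desc(f,\vct{x}^\natural)$. This produces an event $\Omega_t$, depending only on the realization of $\mtx{\Phi}$, of probability at least $1 - \econst^{-ct^2}$, on which
$$
\lambda_{\min}\bigl(\mtx{\Phi};\ \Desc(f,\vct{x}^\natural)\bigr)
\ \geq\ c\alpha\rho^{-2}\sqrt{m} - C\sigma\cdot w\bigl(\Desc(f,\vct{x}^\natural)\bigr) - \alpha t.
$$
Crucially, $\Omega_t$ does not involve the error vector $\vct{e}$, the noise level $\eta$, or the optimizer $\widehat{\vct{x}}_\eta$, so we may quantify over these deterministically on $\Omega_t$.

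Next, working on $\Omega_t$ and assuming $\enorm{\vct{e}} \leq \eta$, I would invoke Proposition~\ref{prop:deterministic-error} for any solution $\widehat{\vct{x}}_\eta$ of~\eqref{eqn:convex-recovery} to obtain
$$
\smnorm{}{\widehat{\vct{x}}_\eta - \vct{x}^\natural}
\ \leq\ \frac{2\eta}{\lambda_{\min}\bigl(\mtx{\Phi};\ \Desc(f,\vct{x}^\natural)\bigr)},
$$
which is meaningful precisely when the denominator is strictly positive. Substituting the lower bound from the previous step into the denominator yields the advertised estimate.

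The only wrinkle is the regime where $c\alpha\rho^{-2}\sqrt{m} - C\sigma\cdot w(\Desc(f,\vct{x}^\natural)) - \alpha t \leq 0$. There the lower bound on $\lambda_{\min}$ is vacuous, but under the positive-part convention $[a]_+ := \max\{a,0\}$ the right-hand side of the displayed inequality is $+\infty$, and the claim holds trivially. Since all of the real work is concentrated in Theorem~\ref{thm:conic-sing-subgauss}, I do not anticipate any substantive obstacle here; the argument is genuinely a one-line concatenation once that theorem is in hand.
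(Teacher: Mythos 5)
Your proposal is correct and follows exactly the paper's route: the corollary is obtained by instantiating Theorem~\ref{thm:conic-sing-subgauss} with $K = \Desc(f,\vct{x}^\natural)$ and then applying the deterministic bound of Proposition~\ref{prop:deterministic-error} on the resulting high-probability event, with the $[\cdot]_+$ convention absorbing the vacuous regime. Your care in noting that the event depends only on $\mtx{\Phi}$, so the deterministic step applies uniformly over $\vct{e}$ and $\widehat{\vct{x}}_\eta$, matches the intended reading of the statement.
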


Corollary~\ref{cor:recovery-subgauss} provides for stable recovery of $\vct{x}^\natural$ as soon as the number $m$ of subgaussian measurements satisfies
$$
m \geq C' \rho^{6} \cdot w^2\big( \Desc(f,\vct{x}^\natural) \big).
$$
How accurate is this result?  Note that standard Gaussian measurements satisfy the assumptions of the corollary with $\rho$ constant, and we need at least $w^2\big(\Desc(f, \vct{x}^\natural) \big)$ standard normal measurements to recover the structured signal $\vct{x}^\natural$ with the complexity measure $f$.  Therefore, the bound is correct up to the constant factor $C'$ and the precise dependence on the eccentricity $\rho$.

\subsection{Proof of Theorem~\ref{thm:conic-sing-subgauss}: Setup and Step 1}
\label{sec:subgauss2}

To establish Theorem~\ref{thm:conic-sing-subgauss}, we rely on Mendelson's Small Ball Method.  The argument also depends on some deep ideas from the theory of generic chaining~\cite{Tal00:Generic-Chaining}, but we only use these results in a na{\"i}ve way.

Fix a cone $K$ in $\R^d$, and define the set $E := K \cap \sphere{d-1}$.  Suppose that $\vct{\phi}$ is a random vector in $\R^d$ that satisfies the conditions set out in Section~\ref{sec:subgauss}, and construct an $m \times d$ random matrix $\mtx{\Phi}$ whose rows are independent copies of $\vct{\phi}$.
Proposition~\ref{prop:km} implies that
\begin{equation} \label{eqn:conic-sing-bd1}
\lambda_{\min}(\mtx{\Phi}; K)
	\geq \xi \sqrt{m} \, Q_{2\xi}(E) - 2 W_m(E) - \xi t
	\quad\text{with probability $\geq 1 - \econst^{-t^2/2}$.}
\end{equation}
This result holds for all $\xi > 0$ and $t > 0$.  To establish Theorem~\ref{thm:conic-sing-subgauss}, we must develop a constant lower bound for the marginal tail function $Q_{2\xi}(E)$, and we also need to compare the mean empirical width $W_m(E)$ with the conic Gaussian width $w(K)$.

\subsection{Step 2: The marginal tail function}

We begin with the lower bound for the marginal tail function $Q_{2\xi}$.  This result is an easy consequence of the second moment method, also known as the Paley--Zygmund inequality.
Let $\vct{u}$ be any vector in $E$.  One version of the second moment method states that
\begin{equation} \label{eqn:paley-zygmund}
\Prob\big\{ \absip{ \vct{u} }{ \smash{\vct{\phi}} } \geq 2\xi \big\}
	\geq \frac{\big[\Expect{} \absip{\vct{u}}{\smash{\vct{\phi}}} - 2\xi  \big]_{+}^2}{\Expect{} \absip{\vct{u}}{\smash{\vct{\phi}}}^2}.
\end{equation}
To control the denominator on the right-hand side of~\eqref{eqn:paley-zygmund},
we use the subgaussian marginal condition to estimate that
$$
\Expect{} \absip{\vct{u}}{\smash{\vct{\phi}}}^2
	= \int_0^\infty 2 s \cdot \Prob\big\{ \absip{\vct{u}}{\smash{\vct{\phi}}} \geq s \big\} \idiff{s}
	\leq 4 \sigma^2.
$$
To bound the numerator on the right-hand side of~\eqref{eqn:paley-zygmund},
we use the nondegeneracy assumption: $\Expect{} \absip{\vct{u}}{\smash{\vct{\phi}}} \geq \alpha$.  
Combining these results and taking the infimum over $\vct{u} \in E$, we reach
\begin{equation} \label{eqn:QE-lower}
Q_{2\xi}(E)
	= \inf_{\vct{u} \in E} \ \Prob\big\{ \absip{ \vct{u} }{ \smash{\vct{\phi}} } \geq 2\xi \big\}
	\geq \frac{(\alpha - 2\xi)^2}{4\sigma^2}
\end{equation}
for any $\xi$ that satisfies $2 \xi < \alpha$.

\subsection{Step 3: The mean empirical width}

Next, we demonstrate that the empirical width $W_m(E)$ is controlled by the conic Gaussian width $w(K)$.  This argument requires sophisticated results from the theory of generic chaining~\cite{Tal00:Generic-Chaining}.  First, observe that the vector $\vct{h} = m^{-1/2} \sum\nolimits_{i=1}^m \eps_i \vct{\phi}_i$ inherits subgaussian marginals from the centered subgaussian distribution $\vct{\phi}$.  Indeed,
$$
\Prob\big\{ \absip{ \vct{u} }{ \vct{h} } \geq t \big\}
	\leq C_1 \econst^{-c_1 t^2/\sigma^2}
	\quad\text{for each $\vct{u} \in \sphere{d-1}$.}
$$
See~\cite[Sec.~5.2.3]{Ver11:Introduction-Nonasymptotic} for an introduction to subgaussian random variables.  In particular, we have the bound
$$
\Prob\big\{ \absip{ \vct{u} - \vct{v} }{ \vct{h} } \geq t \big\}
	\leq C_1 \econst^{-c_1 t^2/(\sigma^2 \enormsm{\vct{u} - \vct{v}}^2)}
	\quad\text{for all $\vct{u}, \vct{v} \in \R^d$.}
$$
Under the latter condition, the generic chaining theorem~\cite[Thm.~1.2.6]{Tal00:Generic-Chaining} asserts that
$$
W_m(E) = \Expect{} \sup_{\vct{u} \in E} \ \ip{ \vct{u} }{ \vct{h} }
	\leq C_2 \sigma \cdot \gamma_2( E, \ell_2 )
$$
where $\gamma_2$ is a geometric functional.
The precise definition of $\gamma_2$ is not important for our purposes because the majorizing measure theorem~\cite[Thm.~2.1.1]{Tal00:Generic-Chaining} states that
$$
\gamma_2(E, \ell_2) \leq C_3 \cdot \Expect{} \sup_{\vct{u} \in E} \ \ip{ \vct{u} }{ \smash{\vct{g}} }
$$
where $\vct{g} \sim \normal(\vct{0}, \Id_d)$.  It follows that
\begin{equation} \label{eqn:emp-width-upper}
W_m(E) 	\leq C_4 \sigma \cdot \Expect{} \sup_{\vct{u} \in E} \ \ip{ \vct{u} }{ \smash{\vct{g}} }
	= C_4 \sigma \cdot w(K).
\end{equation}
We have recalled that $E = K \cap \sphere{d-1}$ to identify the conic Gaussian width $w(K)$.

\subsection{Combining the bounds}

Combine the bounds~\eqref{eqn:conic-sing-bd1},~\eqref{eqn:QE-lower}, and~\eqref{eqn:emp-width-upper} to discover that
$$
\lambda_{\min}(\mtx{\Phi}; K)
	\geq \xi \sqrt{m} \cdot \frac{(\alpha - 2\xi)^2}{4\sigma^2}
	- 2 C_4 \sigma \, w(K) - \xi t
	\quad\text{with probability $\geq 1 - \econst^{-t^2/2}$,}
$$
provided that $2\xi < \alpha$.  Select $\xi = \alpha/6$ to see that
\begin{equation} \label{eqn:tail-bound}
\lambda_{\min}(\mtx{\Phi}; K)
	\geq \frac{1}{54} \cdot \frac{\alpha^3}{\sigma^2} \sqrt{m} - C_5 \sigma \, w(K)
	- \frac{\alpha}{6} t
	\quad\text{with probability $\geq 1 - \econst^{-t^2/2}$.}
\end{equation}
Using the eccentricity $\rho = \sigma/\alpha$,
we simplify the expression~\eqref{eqn:tail-bound} to reach a bound for the minimum conic singular value of a subgaussian random matrix $\mtx{\Phi}$ that satisfies the conditions set out in Section~\ref{sec:subgauss}.  This completes the proof of Theorem~\ref{thm:conic-sing-subgauss}.

\section{The bowling scheme}
\label{sec:bowling}

As we have seen in Theorem~\ref{thm:conic-sing-subgauss},
subgaussian sampling models exhibit behavior similar with the standard Gaussian measurement model.
Yet there are many interesting problems where the random sampling matrix does not conform to the subgaussian
assumption.  In this section, we explain how to adapt Mendelson's Small Ball Method to a range of other
sampling ensembles.  The key idea is to use the conic duality arguments from Section~\ref{sec:width-descent}
to complete the estimate for the mean empirical width.

\subsection{The mean empirical width of a descent cone}
\label{sec:emp-width-descent}

Let us state a simple duality result for the mean empirical width of a descent cone.  This bound
is based on the same principles as Proposition~\ref{prop:width-descent}.

\begin{proposition}[The mean empirical width of a descent cone] \label{prop:emp-width-descent}
Let $f : \R^d \to \overline{\R}$ be a proper convex function, and fix a point $\vct{x} \in \R^d$.
Assume that the subdifferential $\partial f(\vct{x})$ is nonempty and does not contain the origin.
For any random vector $\vct{\phi} \in \R^d$,
\begin{equation*} W_m\big( \Desc(f, \vct{x}) \cap \sphere{d-1}; \vct{\phi} \big) \leq  
	\Expect{} \inf_{\tau \geq 0} \ \dist^2\big( \vct{h}, \ \tau \cdot \partial f(\vct{x}) \big)
	\quad\text{where}\quad
	\vct{h} := \frac{1}{\sqrt{m}} \sum_{i=1}^m \eps_i \vct{\phi}_i.
\end{equation*}
The mean empirical width $W_m$ is defined in~\eqref{eqn:mean-emp-width}.
The random vectors $\vct{\phi}_1, \dots, \vct{\phi}_m$ are independent copies of $\vct{\phi}$,
and $\eps_1, \dots, \eps_m$ are independent Rademacher random variables.
\end{proposition}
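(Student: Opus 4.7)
The plan is to mirror the proof of Proposition~\ref{prop:width-descent} almost verbatim, with the standard Gaussian vector $\vct{g}$ replaced by the Rademacher-symmetrized vector $\vct{h} := m^{-1/2}\sum_{i=1}^m \eps_i \vct{\phi}_i$. The key observation is that the weak duality bound in Proposition~\ref{prop:weak-duality} is purely geometric: it holds pointwise for \emph{every} fixed $\vct{x} \in \R^d$, with no distributional hypothesis. Consequently the same chain of inequalities that handled $\vct{g}$ will apply to $\vct{h}$ regardless of the law of $\vct{\phi}$.

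First, I would set $E := \Desc(f,\vct{x}) \cap \sphere{d-1}$ and unfold the definition~\eqref{eqn:mean-emp-width} to write $W_m(E;\vct{\phi}) = \Expect{} \sup_{\vct{u} \in E} \ip{\vct{u}}{\vct{h}}$. Applying Proposition~\ref{prop:weak-duality} conditionally on $\vct{h}$ and then taking expectations yields
$$
W_m(E;\vct{\phi}) \leq \Expect{} \dist\big(\vct{h}, \Desc(f,\vct{x})^{\polar}\big).
$$

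Second, I would invoke Fact~\ref{fact:polar-descent} to identify the polar cone $\Desc(f,\vct{x})^{\polar}$ with the closure of the conic hull of $\partial f(\vct{x})$. Because the distance to a closure equals the distance to the set itself, and the distance to a union of sets is the infimum of the distances to its members, one has
$$
\dist\big(\vct{h}, \Desc(f,\vct{x})^{\polar}\big) = \inf_{\tau \geq 0} \ \dist\big(\vct{h}, \tau \cdot \partial f(\vct{x})\big).
$$
Substituting into the preceding display produces the bound $W_m(E;\vct{\phi}) \leq \Expect{} \inf_{\tau \geq 0} \dist\big(\vct{h}, \tau \cdot \partial f(\vct{x})\big)$, and a final Jensen step (squaring and pulling the expectation inside) yields the squared form stated in the proposition, exactly as in Proposition~\ref{prop:width-descent}.

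There is essentially no obstacle here, which is precisely the point: the Gaussianity played no role in the proof of Proposition~\ref{prop:width-descent} beyond the identification of the conic Gaussian width, so the same argument goes through for \emph{any} random vector in place of $\vct{g}$. The only nuisances are measurability of the map $\tau \mapsto \dist(\vct{h}, \tau \cdot \partial f(\vct{x}))$ and the legitimacy of exchanging $\inf$ with $\Expect{}$, both of which are routine since $\partial f(\vct{x})$ is a fixed closed convex set and the distance is jointly continuous in $(\vct{h},\tau)$.
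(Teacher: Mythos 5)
Your proposal is correct and is exactly the paper's own argument: the paper proves this proposition by noting that the proof of Proposition~\ref{prop:width-descent} goes through verbatim once $\vct{g}$ is replaced by $\vct{h}$, which is precisely the chain of steps you spell out (weak duality via Proposition~\ref{prop:weak-duality}, the polar identity of Fact~\ref{fact:polar-descent}, and a final Jensen step). Note that, as your Jensen step makes clear, the conclusion is really the squared bound $W_m^2\big(\Desc(f,\vct{x})\cap\sphere{d-1};\vct{\phi}\big) \leq \Expect{}\inf_{\tau\geq 0}\dist^2\big(\vct{h},\,\tau\cdot\partial f(\vct{x})\big)$, consistent with how the bound is used later in the phase retrieval argument.
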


\begin{proof}
The argument is identical with the proof of Proposition~\ref{prop:width-descent}
once we replace the Gaussian vector $\vct{g}$ with the random vector $\vct{h}$.
\end{proof}

\subsection{The bowling scheme}

We are now prepared to describe a general approach for convex signal recovery
from independent random measurements.

The setup is similar with previous sections.
Consider an unknown structured signal $\vct{x}^\natural \in \R^d$
and a complexity measure $f : \R^d \to \overline{\R}$ that is proper and convex.
Let $\mtx{\Phi}$ be a known $m \times d$ sampling matrix,
and suppose that we acquire $m$ noisy linear measurements
of the form $\vct{y} = \mtx{\Phi} \vct{x}^\natural + \vct{e}$.
We wish to analyze the performance of the convex recovery method~\eqref{eqn:convex-recovery}.
Proposition~\ref{prop:deterministic-error} shows that we can accomplish this goal by finding
a lower bound for the minimum conic singular value of the descent cone:
\begin{equation} \label{eqn:conic-sing-???}
\lambda_{\min}\big( \mtx{\Phi}; \ \Desc(f, \vct{x}^\natural) \big)
	\geq \underline{\hspace{2pc} ??? \hspace{2pc}}.
\end{equation}

We want to produce a bound of the form~\eqref{eqn:conic-sing-???} when the rows of the
measurement matrix $\mtx{\Phi}$ are independent copies of a random vector $\vct{\phi}$.
This problem falls within the scope of Mendelson's Small Ball Method.
Introduce the index set $E := \Desc(f, \vct{x}^{\natural}) \cap \sphere{d-1}$.
In light of~\eqref{eqn:conic-sing-sum},
$$
\lambda_{\min}\big( \mtx{\Phi}; \ \Desc(f, \vct{x}^\natural) \big)
	= \inf_{\vct{u} \in E} \ \left( \sum_{i=1}^m \abssqip{\vct{u}}{\vct{\phi}_i} \right)^{1/2}.
$$
We follow Mendelson's general strategy to control the minimum conic singular value,
but we propose a specific technique for bounding the mean empirical width that exploits
the structure of the index set $E$. 

\begin{center}
\begin{table}[h!]
\framebox{
\begin{minipage}{0.9\textwidth}
\vspace{0.5pc}
\centering{\textsc{\textbf{the bowling scheme}}} \vspace{0.5pc}
\begin{enumerate} \setlength{\itemsep}{2mm}
\item	Apply Proposition~\ref{prop:km} to bound the
minimum conic singular value $\lambda_{\min}\big( \mtx{\Phi}; \ \Desc(f, \vct{x}^{\natural} \big)$ below in terms of
the marginal tail function $Q_{2\xi}(E; \vct{\phi} )$
and the mean empirical width $W_m( E; \vct{\phi} )$.
The index set $E := \Desc(f; \vct{x}^\natural) \cap \sphere{d-1}$.

\item	Bound the marginal tail function $Q_{2\xi}(E; \vct{\phi})$ below using a Paley--Zygmund inequality.

\item[(3$'$)]  Apply Proposition~\ref{prop:emp-width-descent} to control the
mean empirical width $W_m(E; \vct{\phi})$.
\end{enumerate}
\vspace{1pc}
\end{minipage}
\hspace{0.25in}}
\end{table}
\end{center}

\noindent
In other words, Step (3) of Mendelson's framework has been specialized to Step (3$'$).

We refer to this instance of Mendelson's Small Ball Method as \term{the bowling scheme}.
The name is chosen as a salute to David Gross's \term{golfing scheme}.  Whereas the
golfing scheme is based on dual optimality conditions for the signal recovery problem~\eqref{eqn:convex-recovery},
the bowling scheme is based on the primal optimality condition through Proposition~\ref{prop:deterministic-error}.
In the bowling scheme, duality enters only when we are ready to estimate the mean empirical width.

In our experience, this idea has been successful whenever we understand how to bound
the conic Gaussian width of the descent cone.  The main distinction is that the random
vector $\vct{\phi}$ may not share the rotational invariance of the standard Gaussian distribution.

\section{Example: Phase retrieval}
\label{sec:phase}

To demonstrate how the bowling scheme works, we consider the question of phase retrieval.
In this problem, we collect linear samples of an unknown signal, but we are only able to observe
their magnitudes.  To reconstruct the original signal, we must resolve the uncertainty about
the phases (or signs) of the measurements.
There is a natural convex program that can achieve this goal,
and the bowling scheme offers an easy way to analyze the number
of measurements that are required.

\subsection{Phase retrieval by convex optimization}

In the phase retrieval problem, we wish to recover a signal $\vct{x}^\natural \in \R^d$
from a family of measurements of the form
\begin{equation} \label{eqn:phase-nonlinear}
y_i = \abssqip{ \smash{\vct{x}^\natural} }{ \smash{\vct{\psi}_i} }
\quad\text{for $i = 1, 2, 3, \dots, m$.}
\end{equation}
The sampling ensemble $\vct{\psi}_1, \dots, \vct{\psi}_m$ consists of known vectors in $\R^d$.
For clarity of presentation, we do not consider the case where the samples are noisy or complex-valued.

Although the samples do not initially appear linear, we can apply a lifting method
proposed by Balan et al.~\cite{BBCE09:Painless-Reconstruction}.  Observe that
\begin{equation*}
\abssqip{ \vct{x} }{ \smash{\vct{\psi}} }
	= \vct{\psi}^\transp \vct{x} \cdot \vct{x}^\transp \vct{\psi}
	= \trace\big( \vct{x}\vct{x}^\transp \cdot \vct{\psi} \vct{\psi}^\transp \big).
\end{equation*}
In view of this expression, it is appropriate to introduce the rank-one positive-semidefinite matrices
\begin{equation} \label{eqn:lifting}
\mtx{X}^\natural = (\vct{x}^\natural) (\vct{x}^\natural)^\transp \in \R^{d \times d}
\quad\text{and}\quad
\mtx{\Psi}_i = \vct{\psi}_i \vct{\psi}_i^\transp \in \R^{d\times d}
\quad\text{for $i = 1, 2, 3, \dots, m$.}
\end{equation}
Then we can express the samples $y_i$ as \emph{linear} functions of the matrix $\mtx{X}^\natural$:
\begin{equation} \label{eqn:phase-data}
y_i = \trace\big( \mtx{X}^\natural \cdot \mtx{\Psi}_i \big)
\quad\text{for $i = 1, 2, 3, \dots, m$.}
\end{equation}
The expression~\eqref{eqn:phase-data} coincides with
the measurement model~\eqref{eqn:data-acquisition}
we have been considering.

We can use convex optimization to reconstruct the unknown matrix $\mtx{X}^\natural$.
It is natural to minimize the Schatten 1-norm to promote low rank,
but we also want to enforce the fact that $\mtx{X}^\natural$ is positive semidefinite~\cite{Faz02:Matrix-Rank}.
To that end, we consider the convex program
\begin{equation} \label{eqn:phase-retrieval}
\minimize{\mtx{X} \in \R^{d \times d}} \trace(\mtx{X})
\subjto
\mtx{X} \psdge \mtx{0}
\quad\text{and}\quad
y_i = \trace\big( \mtx{X} \mtx{\Psi}_i \big)
\quad\text{for each $i = 1, 2, 3, \dots, m$.}
\end{equation}
This formulation involves the lifted variables~\eqref{eqn:lifting}.
We say that the optimization problem~\eqref{eqn:phase-retrieval} \term{recovers} $\vct{x}^\natural$
if the matrix $\mtx{X}^\natural$ is the unique minimizer.  Indeed, in this
case, we can reconstruct the original signal by factorizing the solution
to the optimization problem.

\begin{remark}[Citation for convex phase retrieval]
The formulation~\eqref{eqn:phase-retrieval}
was developed by a working group at the meeting
``Frames for the finite world: Sampling, coding and quantization,''
which took place at the American Institute of Mathematics in Palo Alto in August 2008.
Most of the recent literature attributes
this idea incorrectly.
\end{remark}

\subsection{Phase retrieval from Gaussian measurements}

Recently, researchers have started to consider phase retrieval
problems with random data; see~\cite{CSV13:Phase-Lift} for example.
In the simplest instance,
we choose each sampling vector $\vct{\psi}_i$
independently from the standard normal distribution on $\R^d$:
$$
\vct{\psi}_i \sim \normal(\vct{0}, \Id_d).
$$
Then each sampling matrix $\mtx{\Psi}_i = \vct{\psi}_i \vct{\psi}_i^\transp$
follows a Wishart distribution.  These random matrices do not have
subgaussian marginals, so we cannot apply Corollary~\ref{cor:recovery-subgauss}
to study the performance of the optimization problem~\eqref{eqn:phase-retrieval}.
Nevertheless, we can make short work of the analysis by using the bowling scheme.

\begin{theorem}[Phase retrieval from Gaussian measurements] \label{thm:phase-retrieval}
Let $\vct{x}^\natural$ be a signal in $\R^d$.
Let $\vct{\psi}_i \sim \normal(\vct{0}, \Id_d)$
be independent standard Gaussian vectors,
and consider random measurements $y_i = \abssqip{ \smash{\vct{x}^\natural} }{ \smash{ \vct{\psi}_i } }$
for $i = 1, 2, 3, \dots, m$.
Assuming that $m \geq Cd$,
the convex phase retrieval problem~\eqref{eqn:phase-retrieval}
recovers $\vct{x}^\natural$ with probability at least $1 - \econst^{-cm}$.
The numbers $c$ and $C$ are positive absolute constants.
\end{theorem}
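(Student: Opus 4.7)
The plan is to apply the bowling scheme directly to the lifted convex program~\eqref{eqn:phase-retrieval}. With the lifting~\eqref{eqn:lifting}, recovering $\vct{x}^\natural$ amounts to showing that $\mtx{X}^\natural := \vct{x}^\natural (\vct{x}^\natural)^\transp$ is the unique optimum. Set $f(\mtx{X}) := \trace(\mtx{X}) + \cInd_{\{\mtx{X} \psdge \mtx{0}\}}(\mtx{X})$, and regard the sampling operator $\mtx{\Phi}$ as the linear map $\mtx{U} \mapsto \big(\trace(\mtx{U}\mtx{\Psi}_i)\big)_{i=1}^m$ on symmetric $d \times d$ matrices equipped with the Frobenius inner product. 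By Proposition~\ref{prop:deterministic-error} with $\eta = 0$, the task reduces to showing $\lambda_{\min}\big(\mtx{\Phi}; \Desc(f, \mtx{X}^\natural)\big) > 0$ with probability at least $1 - \econst^{-cm}$. After rotating coordinates so that $\vct{x}^\natural = \enorm{\vct{x}^\natural}\vct{e}_1$ and writing symmetric matrices in the block form $\begin{bmatrix} u_{11} & \vct{u}^\transp \\ \vct{u} & \mtx{U}_{22}\end{bmatrix}$, a direct computation identifies the closure of the descent cone as $\{\mtx{U} : \mtx{U}_{22} \psdge \mtx{0},\ \trace(\mtx{U}) \leq 0\}$ and the subdifferential as $\partial f(\mtx{X}^\natural) = \big\{\begin{bmatrix} 1 & \mtx{0} \\ \mtx{0} & \mtx{M}\end{bmatrix} : \mtx{M} \psdle \Id_{d-1}\big\}$.

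Let $E$ denote the descent cone intersected with the Frobenius unit sphere. For Step~2 of the bowling scheme, note that $\ip{\mtx{U}}{\mtx{\Psi}} = \vct{\psi}^\transp \mtx{U} \vct{\psi}$ is a second-order Gaussian chaos. A direct variance computation gives $\Expect(\vct{\psi}^\transp \mtx{U}\vct{\psi})^2 = 2\fnormsq{\mtx{U}} + (\trace \mtx{U})^2 \geq 2$ for every $\mtx{U} \in E$, and Gaussian hypercontractivity on the second chaos yields the fourth-moment bound $\Expect(\vct{\psi}^\transp \mtx{U}\vct{\psi})^4 \leq 81\big(\Expect(\vct{\psi}^\transp \mtx{U}\vct{\psi})^2\big)^2$. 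Paley--Zygmund then delivers an absolute constant $c_0 > 0$ with $Q_1(E; \mtx{\Psi}) \geq c_0$. Choosing $\xi = 1/2$ in Proposition~\ref{prop:km} gives
\[
\lambda_{\min}\big(\mtx{\Phi}; \Desc(f, \mtx{X}^\natural)\big) \geq \tfrac{c_0}{2}\sqrt{m} - 2 W_m(E; \mtx{\Psi}) - \tfrac{1}{2}t
\]
with probability at least $1 - \econst^{-t^2/2}$.

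For Step~$3'$, I invoke Proposition~\ref{prop:emp-width-descent} with $\vct{\phi} = \mtx{\Psi}$ and set $\vct{h} := m^{-1/2}\sum_i \eps_i \mtx{\Psi}_i = \begin{bmatrix} h_{11} & \vct{h}_{\mathrm{off}}^\transp \\ \vct{h}_{\mathrm{off}} & \mtx{H}_{22}\end{bmatrix}$, a random symmetric matrix partitioned in the same block form. The explicit subdifferential gives
\[
\dist^2_{\rm F}\big(\vct{h},\, \tau \cdot \partial f(\mtx{X}^\natural)\big) = (h_{11} - \tau)^2 + 2\enormsq{\vct{h}_{\mathrm{off}}} + \sum_{i=1}^{d-1} [\lambda_i(\mtx{H}_{22}) - \tau]_+^2,
\]
and the choice $\tau = [\lambda_{\max}(\mtx{H}_{22})]_+$ kills the eigenvalue sum. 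Bounding $(a-b)^2 \leq 2a^2 + 2b^2$ in the remaining square yields
\[
W_m^2(E; \mtx{\Psi}) \leq 2 \Expect h_{11}^2 + 2 \Expect \norm{\mtx{H}_{22}}^2 + 2 \Expect \enormsq{\vct{h}_{\mathrm{off}}}.
\]
Direct moment calculations give $\Expect h_{11}^2 = 3$ and $\Expect \enormsq{\vct{h}_{\mathrm{off}}} = d-1$. The central technical step---and what I expect to be the main obstacle---is to show $\Expect \norm{\mtx{H}_{22}}^2 \lesssim d$ whenever $m \gtrsim d$: the matrix $\mtx{H}_{22} = m^{-1/2}\sum_i \eps_i \vct{\psi}_{i,2}\vct{\psi}_{i,2}^\transp$ is a Rademacher-symmetrized sum of independent rank-one Wishart summands whose entries have only subexponential tails, so the bounded and subgaussian matrix concentration tools invoked earlier in the paper are not directly applicable; a matrix Bernstein-type argument, anchored to the variance proxy $\Expect \mtx{H}_{22}^2 = (d+1)\Id_{d-1}$, is the natural route. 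Granting this estimate yields $W_m(E; \mtx{\Psi}) \lesssim \sqrt{d}$, and combining with the previous display gives $\lambda_{\min}\big(\mtx{\Phi}; \Desc(f, \mtx{X}^\natural)\big) \geq c_1\sqrt{m} - c_2\sqrt{d} - t/2$ with probability at least $1 - \econst^{-t^2/2}$. Choosing $m \geq Cd$ for a sufficiently large absolute constant $C$ and $t = c_3 \sqrt{m}$ makes the right-hand side strictly positive with probability at least $1 - \econst^{-cm}$, completing the proof.
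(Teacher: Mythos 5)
Your write-up follows the paper's proof almost step for step: the same lifting and reduction via Proposition~\ref{prop:deterministic-error} with $\eta=0$, the same Paley--Zygmund plus Gaussian hypercontractivity argument for $Q_1(E)\geq c_0$ (the identity $\Expect(\vct{\psi}^\transp\mtx{U}\vct{\psi})^2=2\fnormsq{\mtx{U}}+(\trace\mtx{U})^2\geq 2$ is exactly the paper's computation), and the same use of Proposition~\ref{prop:emp-width-descent} with the random shift $\tau$ tied to $\lambda_{\max}(\mtx{H}_{22})$. The one place you depart from the paper is the estimate you explicitly leave unproven, $\Expect\normsq{\mtx{H}_{22}}\lesssim d$ for $m\gtrsim d$, and that is precisely where the technical content of the theorem lives, so ``granting this estimate'' is a genuine gap. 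Worse, the route you propose for it would not deliver the stated result: a matrix Bernstein (or matrix moment) bound anchored to the variance proxy $\Expect\mtx{H}_{22}^2=(d+1)\Id_{d-1}$ necessarily pays a dimensional logarithm, since the summands $m^{-1/2}\eps_i\vct{\psi}_{i,2}\vct{\psi}_{i,2}^\transp$ have spectral norm of order $d/\sqrt{m}$; the resulting bound is of the form $\Expect\norm{\mtx{H}_{22}}\lesssim\sqrt{d\log d}+(d/\sqrt{m})\log d$, which in the regime $m\asymp d$ gives a width bound of order $\sqrt{d}\log d$ and hence only the weaker sampling complexity $m\gtrsim d\,\mathrm{polylog}(d)$, not the advertised $m\geq Cd$.

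The paper closes this step log-free by a different, more elementary device: by eigenvalue interlacing, $\tau\leq\lambda_{\max}(\mtx{H})$, and for each fixed unit vector $\vct{u}$ the scalar $\vct{u}^\transp\mtx{H}\vct{u}=m^{-1/2}\sum_{i}\eps_i\absip{\vct{u}}{\vct{\psi}_i}^2$ is a normalized sum of independent centered subexponential variables, so scalar Bernstein combined with a union bound over a constant-ratio net of the sphere (the ``standard net arguments'' of~\cite[Sec.~5.4]{Ver11:Introduction-Nonasymptotic}) gives $\Prob\{\lambda_{\max}(\mtx{H})\geq C_3\sqrt{d}\}\leq\econst^{-c_1 d}$ once $m\geq C_2 d$; integrating the exponentially decaying tail yields $\Expect\tau^2\leq C_4 d$, whence $\Expect(h_{11}-\tau)^2\leq C_5 d$ and $W_m(E)\leq C_1\sqrt{d}$. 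To complete your argument at the claimed sampling complexity you should either carry out this net-and-Bernstein estimate (or cite an equivalent log-free bound for $\Expect\norm{\sum_i\eps_i\vct{\psi}_i\vct{\psi}_i^\transp}$), rather than appeal to matrix Bernstein; everything else in your proposal is correct and matches the paper.
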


The sampling complexity $m \geq Cd$ established in Theorem~\ref{thm:phase-retrieval}
is qualitatively optimal.  Indeed, a dimension-counting argument shows that we need at least $m \geq d$
nonadaptive linear measurements to reconstruct a general vector in $\vct{x}^\natural \in \R^d$.

\begin{remark}[Extensions]
There are a number of obvious improvements to Theorem~\ref{thm:phase-retrieval}
that follow with a little more effort.
For example, it is clear that the convex phase retrieval method is stable.
The exceedingly high success probability also allows us to establish
uniform results for all $d$-dimensional vectors by means of net arguments and union bounds.
Furthermore, the Gaussian assumption is inessential; it is possible to establish
similar theorems for other sampling distributions.
We leave these refinements for the avid reader.
\end{remark}

\subsection{Proof of Theorem~\ref{thm:phase-retrieval}: Setup}

Let us rewrite the optimization problem~\eqref{eqn:phase-retrieval} in a form
that is more conducive to our methods of analysis.  First, introduce the
inner product space $\R^{d \times d}_{\rm sym}$ of $d \times d$
symmetric matrices, equipped with the trace inner product
$\ip{\mtx{B}}{\mtx{A}} := \trace(\mtx{AB})$
and the Frobenius norm $\fnorm{\cdot}$.  Define the linear operator
$$
\mtx{\Phi} : \R^{d\times d}_{\rm sym} \to \R^m
\quad\text{where}\quad
[ \mtx{\Phi}(\mtx{X}) ]_i = \ip{ \mtx{X} }{ \mtx{\Psi}_i }
\quad\text{for $i = 1, 2, 3, \dots, m$.}
$$
Collect the measurements into a vector $\vct{y} = (y_1, \dots, y_m)^\transp \in \R^m$,
and observe that $\vct{y} = \mtx{\Phi}(\mtx{X}^\natural)$ because of the expression~\eqref{eqn:phase-data}.
Next, define the convex indicator function of the positive-semidefinite cone:
$$
\iota : \R^{d\times d}_{\rm sym} \to \overline{\R}
\quad\text{where}\quad
\iota(\mtx{X})
	= \begin{cases} 0, & \text{$\mtx{X}$ is positive semidefinite} \\
	+\infty, & \text{otherwise}.
	\end{cases}
$$
Introduce the convex regularizer
$$
f : \R^{d \times d}_{\rm sym} \to \overline{\R}
\quad\text{where}\quad
f(\mtx{X}) = \trace(\mtx{X}) + \iota(\mtx{X}).
$$
With this notation, we can write~\eqref{eqn:phase-retrieval} in the form
\begin{equation} \label{eqn:cpt-phase-retrieval}
\minimize{\mtx{X}\in\R^{d\times d}_{\rm sym}} f(\mtx{X})
\subjto \vct{y} = \mtx{\Phi}(\mtx{X}).
\end{equation}
The formulation~\eqref{eqn:cpt-phase-retrieval} matches our core problem~\eqref{eqn:convex-recovery}
with the error vector $\vct{e} = \vct{0}$ and error tolerance $\eta = 0$.

Proposition~\ref{prop:deterministic-error} demonstrates that
$\mtx{X}^\natural$ is the unique solution of~\eqref{eqn:cpt-phase-retrieval}
whenever
$$
\lambda_{\min}\big(\mtx{\Phi};\ \Desc(f, \mtx{X}^\natural) \big)
	> 0.
$$
We must determine how many measurements $m$
suffice for this event to hold with high probability.

\subsection{Step 1: The nonnegative empirical process bound}

Define the set
$$
E := \big\{ \mtx{U} \in \Desc(f, \mtx{X}^\natural) : \fnorm{\mtx{U}} = 1 \big\}
	\subset \R^{d \times d}_{\rm sym}.
$$
Proposition~\ref{prop:km} demonstrates that
\begin{equation} \label{eqn:phase-km}
\lambda_{\min}\big(\mtx{\Phi};\ \Desc(f, \mtx{X}^\natural) \big)
	= \inf_{\mtx{U} \in E}\ \left( \sum_{i=1}^m \abssqip{\mtx{U}}{\mtx{\Psi}_i} \right)^{1/2}
	\geq \xi \sqrt{m} \, Q_{2\xi}(E) - 2 \, W_m(E) - \xi t
\end{equation}
with probability at least $1 - \econst^{-t^2/2}$.
In this setting, the marginal tail function is defined as
$$
Q_{2\xi}(E) := \inf_{\mtx{U} \in E} \ \Prob\big\{ \absip{ \mtx{U} }{ \mtx{\Psi}_1 } \geq 2 \xi \big\}.
$$
The mean empirical width is defined as
$$
W_m(E) := \Expect{} \sup_{\mtx{U} \in E} \ \ip{ \mtx{U} }{ \mtx{H} }
	\quad\text{where}\quad
	\mtx{H} := \frac{1}{\sqrt{m}} \sum_{i=1}^m \eps_i \mtx{\Psi}_i.
$$
Here, $\{\eps_i\}$ is an independent family of Rademacher random variables,
independent from everything else.

\subsection{Step 2: The marginal tail function}

We can use the Paley--Zygmund inequality to show that
\begin{equation} \label{eqn:phase-tail}
Q_1(E) = \inf_{\mtx{U} \in E} \ \Prob\big\{ \absip{ \mtx{U} }{ \mtx{\Psi}_1 } \geq 1 \big\}
	\geq c_0.
\end{equation}
We have implicitly chosen $\xi = \half$, and $c_0$ is a positive absolute constant.

\subsubsection{The tail bound}

To perform this estimate, we apply the Paley--Zygmund inequality in the form
$$
\Prob\bigg\{ \abssqip{\mtx{U}}{\mtx{\Psi}_1} \geq \tfrac{1}{2} \big(\Expect{} \abssqip{\mtx{U}}{\mtx{\Psi}_1}\big) \bigg\}
	\geq \frac{1}{4} \cdot \frac{ \big( \Expect{} \abssqip{ \mtx{U} }{ \mtx{\Psi}_1 } \big)^2 }
	{\Expect{} \absip{\mtx{U}}{ \mtx{\Psi}_1 }^4}.
$$
The easiest way to treat the expectation in the denominator is to invoke
Gaussian hypercontractivity~\cite[Sec.~3.2]{LT91:Probability-Banach}.  Indeed,
$$
\big( \Expect{} \absip{ \mtx{U} }{ \mtx{\Psi}_1 }^4 \big)^{1/4}
	\leq C_0 \, \big( \Expect{} \abssqip{ \mtx{U} }{ \mtx{\Psi}_1 } \big)^{1/2}
$$
because $\ip{ \mtx{U} }{ \mtx{\Psi}_1 }$
is a second-order polynomial in the entries of $\vct{\psi}_1$.
Combine the last two displays to obtain
$$
\Prob\bigg\{ \abssqip{\mtx{U}}{\mtx{\Psi}_1} \geq \tfrac{1}{2} \big(\Expect{} \abssqip{\mtx{U}}{\mtx{\Psi}_1}\big) \bigg\}
	\geq\frac{1}{4 \cdot C_0^4} = c_0.
$$
We can bound the remaining expectation by means of an explicit calculation.
Assuming that $\mtx{U} \in E$,
\begin{align*}
\Expect{} \abssqip{ \mtx{U} }{ \mtx{\Psi}_1 }
	= 3 \sum_{i=1}^m  \abssq{u_{ii}} + 2 \, \sum_{i,j=1}^m \abssq{\smash{u_{ij}}} + \abssq{ \sum_{i=1}^m u_{ii} }
	\geq 2.
\end{align*}
We have used the fact that $\mtx{U}$ is a symmetric matrix with unit Frobenius norm.  In conclusion,
$$
\Prob\left\{ \abssqip{\mtx{U}}{\mtx{\Psi}_1} \geq 1 \right\} \geq c_0
\quad\text{for each $\mtx{U} \in E$.}
$$
This inequality implies~\eqref{eqn:phase-tail}.

\subsection{Step 3$'$: The mean empirical width of the descent cone}

We can apply Proposition~\ref{prop:emp-width-descent} to demonstrate that the mean empirical width satisfies
\begin{equation} \label{eqn:phase-width}
W_m(E) \leq C_1\sqrt{d}
\quad\text{for $m \geq C_2d$.}
\end{equation}
The numbers $C_1$ and $C_2$ are positive, absolute constants.

\subsubsection{The width bound}

The bound holds trivially when $\mtx{X}^\natural = \mtx{0}$, so we may assume that the unknown matrix is nonzero.
Select a coordinate system where
$$
\mtx{X}^\natural = \begin{bmatrix} a & \vct{0}^\transp \\ \vct{0} & \mtx{0} \end{bmatrix}
	\in \R^{d\times d}_{\rm sym}
\quad\text{where $a > 0$.}
$$
Recall that the matrix $\mtx{H} = m^{-1/2} \sum_{i=1}^m \eps_i \mtx{\Psi}_i$,
where $\mtx{\Psi}_i = \vct{\psi}_i \vct{\psi}_i^\transp$ and
$\vct{\psi}_i \sim \normal(\vct{0}, \Id_d)$.
Partition $\mtx{H}$ conformally with $\mtx{X}^\natural$:
$$
\mtx{H} = \begin{bmatrix} h_{11} & \vct{h}_{21}^\transp \\ \vct{h}_{21} & \mtx{H}_{22} \end{bmatrix}.
$$
Define the random parameter $\tau = \lambda_{\max}( \mtx{H}_{22} )$, where $\lambda_{\max}$ denotes
the maximum eigenvalue of a symmetric matrix. Proposition~\ref{prop:emp-width-descent} delivers the width bound
\begin{equation} \label{eqn:phase-dual-width}
W_{m}(E) = \Expect{} \sup_{\mtx{U} \in E} \ \ip{\mtx{U}}{\mtx{H}}
	\leq \left( \Expect{} \dist_{\rm F}^2\big( \mtx{H}, \ \tau \cdot \partial f(\mtx{X}^\natural) \big) \right)^{1/2} .
\end{equation}
Using standard calculus rules for subdifferentials~\cite[Chap.~23]{Roc70:Convex-Analysis},
we determine that
$$
\partial f(\mtx{X}^\natural) = \left\{ \begin{bmatrix} 1 & \vct{0}^\transp \\ \vct{0} & \mtx{Y} \end{bmatrix}
	\in \R^{d\times d}_{\rm sym} : \lambda_{\max}(\mtx{Y}) \leq 1 \right\}.
$$
Next,
\begin{align} \label{eqn:phase-dist}
\Expect{} \dist_{\rm F}^2 \big( \mtx{H}, \ \partial f(\mtx{X}^\natural) \big)
	= \Expect{} (h_{11} - \tau)^2 + 2\, \Expect{} \normsq{ \vct{h}_{21} }
	+ \Expect{} \inf_{\lambda_{\max}(\mtx{S}) \leq 1} \ \fnormsq{\mtx{H}_{22} - \tau \cdot \mtx{Y}}.
\end{align}
By construction, the third term on the right-hand side of~\eqref{eqn:phase-dist} is zero.
By direct calculation, the second term on the right-hand side of~\eqref{eqn:phase-dist}
satisfies
\begin{equation} \label{eqn:phase-term-2}
\Expect{} \normsq{ \vct{h}_{21} } = d - 1.
\end{equation}
Finally, we turn to the first term on the right-hand side of~\eqref{eqn:phase-dist}.
Relatively crude bounds suffice here.
By interlacing of eigenvalues,
$$
\tau = \lambda_{\max}( \mtx{H}_{22} )
	\leq \lambda_{\max}( \mtx{H} )
	= \frac{1}{\sqrt{m}} \lambda_{\max}\left( \sum_{i=1}^m \eps_i \vct{\psi}_i \vct{\psi}_i^\transp \right).
$$
Standard net arguments, such as those in~\cite[Sec.~5.4.1]{Ver11:Introduction-Nonasymptotic},
demonstrate that
$$
\Prob\big\{ \lambda_{\max}(\mtx{H}) \geq C_3\sqrt{d} \big\}
	\leq \econst^{-c_1d},
	\quad\text{provided that $m \geq C_2 d$.}
$$
Together, the last two displays imply that $\Expect{} \tau^2 \leq C_4 d$.  Therefore,
\begin{equation} \label{eqn:phase-term-1}
\Expect{} (h_{11} - \tau)^2
	\leq C_5 d.
\end{equation}
Introducing~\eqref{eqn:phase-dist},~\eqref{eqn:phase-term-2}, and~\eqref{eqn:phase-term-1}
into~\eqref{eqn:phase-dual-width}, we arrive at the required bound~\eqref{eqn:phase-width}.

\begin{remark}[Other sampling distributions]
The only challenging part of the calculation is the bound on $\lambda_{\max}(\mtx{H})$.
For more general sampling distributions, we can easily obtain the required estimate from
the matrix moment inequality~\cite[Thm.~A.1]{CGT12:Masked-Sample}.
\end{remark}

\subsection{Combining the bounds}

Assume that $m \geq C_2 d$.
Combine the estimates~\eqref{eqn:phase-km},~\eqref{eqn:phase-tail}, and~\eqref{eqn:phase-width}
to reach
$$
\lambda_{\min}\big(\mtx{\Phi};\ \Desc(f, \mtx{X}^\natural) \big)
	\geq c_2 \sqrt{m} - C_6 \sqrt{d} - \half t
$$
with probability at least $1 - \econst^{-t^2/2}$.
Choosing $t = c_3 \sqrt{m}$, we find that the minimum
conic singular value is positive with probability
at least $1 - \econst^{-c_4 m}$.
In this event, Proposition~\ref{prop:deterministic-error}
implies that $\mtx{X}^\natural$ is the unique solution to the phase retrieval
problem~\eqref{eqn:phase-retrieval}.
This observation completes the proof of Theorem~\ref{thm:phase-retrieval}.

\section*{Acknowledgments}

JAT gratefully acknowledges support from ONR award N00014-11-1002, AFOSR award FA9550-09-1-0643, and a Sloan Research Fellowship.  Thanks are also due to the Moore Foundation.

\bibliographystyle{myalpha}
\newcommand{\etalchar}[1]{$^{#1}$}

\end{document}